\documentclass[11pt]{article}

\usepackage[utf8]{inputenc}
\usepackage{amsmath,amsthm,amssymb,caption,subcaption,nicefrac}
\usepackage{fullpage}
\usepackage{bm}
\usepackage[table]{xcolor}
\usepackage{multirow}
\usepackage{arydshln}
\usepackage{csquotes}
\usepackage{anyfontsize}
\usepackage[shortlabels]{enumitem}

\usepackage{amsmath}
\usepackage{amssymb}
\usepackage{mathtools}
\usepackage{amsthm}

\usepackage[T1]{fontenc} 

\usepackage{algorithm}
\usepackage[noend]{algorithmic}
\usepackage[algo2e,ruled,noend]{algorithm2e}

\usepackage[most]{tcolorbox}
\usepackage{lipsum}
\usepackage[hybrid]{markdown}

\usepackage{hyperref}
\hypersetup{
   colorlinks=true,
   linkcolor=olive,
   citecolor=purple
}

\usepackage{cleveref}
\usepackage{aliascnt}

\definecolor{Gred}{RGB}{219, 50, 54}
\definecolor{Ggreen}{RGB}{60, 186, 84}
\definecolor{Gblue}{RGB}{72, 133, 237}
\definecolor{Gyellow}{RGB}{247, 178, 16}
\definecolor{ToCgreen}{RGB}{0, 128, 0}
\definecolor{myGold}{RGB}{231,141,20}
\definecolor{myBlue}{rgb}{0.19,0.41,.65}
\definecolor{myPurple}{RGB}{175,0,124}

\newtheorem{theorem}{Theorem}

\newaliascnt{proposition}{theorem}
\newtheorem{proposition}[proposition]{Proposition}
\aliascntresetthe{proposition}

\newaliascnt{lemma}{theorem}
\newtheorem{lemma}[lemma]{Lemma}
\aliascntresetthe{lemma}

\newaliascnt{corollary}{theorem}
\newtheorem{corollary}[corollary]{Corollary}
\aliascntresetthe{corollary}

\newaliascnt{fact}{theorem}

\aliascntresetthe{fact}

\newaliascnt{definition}{theorem}
\newtheorem{definition}[definition]{Definition}
\aliascntresetthe{definition}

\newaliascnt{observation}{theorem}

\aliascntresetthe{observation}

\newaliascnt{conjecture}{theorem}

\aliascntresetthe{conjecture}

\newaliascnt{assumption}{theorem}

\aliascntresetthe{assumption}

\newaliascnt{claim}{theorem}
\newtheorem{claim}[claim]{Claim}
\aliascntresetthe{claim}

\newaliascnt{remark}{theorem}

\aliascntresetthe{remark}

\crefname{theorem}{theorem}{theorems}
\Crefname{theorem}{Theorem}{Theorems}

\crefname{proposition}{proposition}{propositions}
\Crefname{proposition}{Proposition}{Propositions}

\crefname{lemma}{lemma}{lemmas}
\Crefname{lemma}{Lemma}{Lemmas}

\crefname{corollary}{corollary}{corollaries}
\Crefname{corollary}{Corollary}{Corollaries}

\crefname{fact}{fact}{facts}
\Crefname{fact}{Fact}{Facts}

\crefname{definition}{definition}{definitions}
\Crefname{definition}{Definition}{Definitions}

\crefname{observation}{observation}{observations}
\Crefname{observation}{Observation}{Observations}

\crefname{conjecture}{conjecture}{conjectures}
\Crefname{conjecture}{Conjecture}{Conjectures}

\crefname{assumption}{assumption}{assumptions}
\Crefname{assumption}{Assumption}{Assumptions}

\crefname{claim}{claim}{claims}
\Crefname{claim}{Claim}{Claims}

\crefname{remark}{remark}{remarks}
\Crefname{remark}{Remark}{Remarks}

\renewcommand{\phi}{\varphi}

\newcommand{\defeq}{\coloneq}
\DeclareMathOperator{\score}{score}
\newcommand{\LR}[1]{\mathrm{LR}(#1)}

\allowdisplaybreaks

\newcommand{\probstyle}[1]{\textsc{#1}\xspace}
\newcommand{\wthiele}{\probstyle{$w$-Thiele}}

\newcommand{\genthiele}{\probstyle{Generalized Thiele}}
\newcommand{\thiele}{\probstyle{Thiele}}

\newtcolorbox{prompt}{
    enhanced,
    breakable,
    colback=blue!3!white,    
    colframe=cyan!30!black,  
    coltitle=white,          
    fonttitle=\bfseries,     
    title={User Prompt},
    boxrule=1pt,
    arc=4pt,                 
    drop shadow,             
    toptitle=3pt,
    bottomtitle=3pt
}

\newtcolorbox{answer}{
    enhanced,
    breakable,
    colback=white,           
    colframe=gray!40!white,  
    coltitle=black,          
    fonttitle=\bfseries,     
    title={AI Response},
    boxrule=0.5pt,
    arc=4pt,
    left=4pt, right=4pt, top=4pt, bottom=4pt,
    before=\vspace{0.5em},   
    after=\vspace{2em},      
    toptitle=3pt,
    bottomtitle=3pt
}

\title{Polynomial-Time Algorithm for Thiele Voting Rules with\\
Voter Interval Preferences}
\author{
    Pasin Manurangsi\\
    Google Research\\
    {\small\texttt{pasin@google.com}}\\[-2pt]
    \and
    Krzysztof Sornat\\
    AGH University, Poland\\[-2pt]
    {\small\texttt{sornat@agh.edu.pl}}
}
\date{}

\begin{document}

\maketitle

\begin{abstract}
  We present a polynomial-time algorithm for computing an optimal committee of size $k$ under any given Thiele voting rule for elections on the Voter Interval domain (i.e., when voters can be ordered so that each candidate is approved by a consecutive voters).
  Our result extends to the Generalized Thiele rule, in which each voter has an individual weight (scoring) sequence.
  This resolves a 10-year-old open problem that was originally posed for Proportional Approval Voting and later extended to every Thiele rule (Elkind and Lackner, IJCAI 2015; Peters, AAAI 2018).

  Our main technical ingredient is a new structural result---a concavity theorem for families of intervals.
  It shows that, given two solutions of different sizes, one can construct a solution of any intermediate size whose score is at least the corresponding linear interpolation of the two scores.
  As a consequence, on Voter Interval profiles, the optimal total Thiele score is a concave function of the committee size.
  We exploit this concavity within an optimization framework based on a Lagrangian relaxation of a natural integer linear program formulation, obtained by moving the cardinality constraint into the objective.
  On Voter Interval profiles, the resulting constraint matrix is totally unimodular, so it can be solved in polynomial time.

  Our main algorithm and its proof were obtained via human--AI collaboration.
  In particular, a slightly simplified version of the main structural theorem used by the algorithm was obtained in a single call to Gemini Deep Think.
\end{abstract}

\section{Introduction}
Approval-based committee (ABC) elections are an important topic in computational social choice literature~\cite{lackner23abc_book} having applications beyond committee selection, e.g., in recommendation systems~\cite{SkowronFL16_aij_set_of_items} and blockchain~\cite{BoehmerBCG0S24_polkadot}.
In ABC elections, voters cast votes in a form of approval ballots,
i.e., every voter indicates a subset of candidates.
An ABC voting rule takes a collection of approval ballots (an approval profile) and outputs a committee of a given size $k$.

One of a general class of ABC voting rules are Thiele rules~\cite{Thiele95} which contains, as special cases, such basic voting rules as:
Multiwinner Approval Voting (AV),
Proportional Approval Voting (PAV)~\cite{AzizGGMMW15}
and Approval Voting Chamberlin-Courant (CC)~\cite{ChamberlinCourant83},
which covers different properties from efficiency via proportionality to excellence~\cite{LacknerS21}.
A Thiele voting rule is parameterized by an a non-increasing infinite sequence of weights $(w_1,w_2,\dots)$. The partial sum $w_1+w_2+\dots+w_R$ models satisfaction of a voter (score of a voter) from receiving $R$ representatives in a committee.
A $w$-Thiele rule selects a committee of size $k$ that maximizes total score, i.e., sum of voters scores.

From the computational complexity point of view,
selecting an optimal committee under Thiele rules is NP-hard (except for AV)~\cite{AzizGGMMW15,SkowronFL16_aij_set_of_items}.
Therefore, computability of Thiele rules has been investigated under different paradigms:
approximation algorithms~\cite{SkowronFL16_aij_set_of_items,ByrkaSS18_hkm,DudyczMMS20_ijcai,BarmanFF21_concave_coverage,BarmanFGG22_lcoverage},
parameterized algorithms~\cite{BetzlerSU13_jair,ElkindL15_ijcai15,YangW18-aamas18,BredereckF0KN20_aaai,SornatWX22_ijcai,YangW23-jaamas,GuptaJSSU2025_independent_fpt,LassotaSornat_aaai26}
and structured domains~\cite{BetzlerSU13_jair,ElkindL15_ijcai15,ElkindLP2017trends_book,ConstantinescuE21_sc_linear,SornatWX22_ijcai,ElkindLP25_arxiv_preference_restrictions,LassotaSornat_aaai26}.
In particular, Thiele rules on Candidate Interval (CI) domain (i.e., when candidates can be ordered so that each voter approves consecutive candidates)
are polynomial-time solvable by a proper Integer Linear Programming (ILP) formulation which has been proven to be totally unimodular (TU)~\cite{Peters18_aaai}.
This result extends to dichotomous Single-Peaked on a Circle profiles (this domain generalizes approving intervals of candidates on a path into approving intervals of candidates on a circle graph)~\cite{PetersL20_spoc_jair}.
An analogous to CI is the Voter Interval (VI) domain which contains all approval profiles with the property that voters can be ordered so that each candidate is approved by a consecutive voters.

Despite the effort made over the years and tackling Thiele rules under different paradigms and using different techniques,
the computational complexity of solving any given Thiele rule on a VI profile is already a 10-year-old open problem~\cite{ElkindL15_ijcai15,ElkindLP2017trends_book,Peters18_aaai,YangW18-aamas18,PetersL20_spoc_jair,GodziszewskiB0F21_vci_aaai21,lackner23abc_book,YangW23-jaamas,ElkindLP25_arxiv_preference_restrictions,LassotaSornat_aaai26}.
It did not help that VI profiles are analogous to CI profiles by transposing voters with candidates.
In particular, as observed by Peters~\cite{Peters18_aaai}, a corresponding ILP formulation for VI profiles is not TU, in contrast to ILP formulation for CI profiles.
This comes from the fact that the constraint matrix for VI has consecutive-ones property column-wise,
but the cardinality constraint correspond to a row of ones.
A concrete family of non-TU instances (with unbounded determinant prohibiting TU property) has been given in~\cite{LassotaSornat_aaai26}.
Although, the constraint matrix not being TU does not imply hardness of the problem.

Over the time, partial progress has been established with two approaches.
First, for some particular Thiele rules on VI there have been designed polynomial time algorithms.
Namely, for CC~\cite{BetzlerSU13_jair} and for all Thiele rules that are $\ell$-truncated, i.e., when $w_i = 0$ for $i \ge \ell+1$~\cite{ElkindL15_ijcai15}.
Without knowing the problem is NP-hard,
second approach was designing FPT algorithms, e.g., algorithms parameterized by the size of an approval ballot and by the support size~\cite{ElkindL15_ijcai15,LassotaSornat_aaai26}.

\subsection{Our Contribution}
We resolve this open problem by giving a polynomial-time algorithm that, for any fixed Thiele voting rule, computes an optimal size-$k$ committee for elections on the Voter Interval domain.
Our result also extends to the Generalized Thiele rule, in which each voter has an individual weight (scoring) sequence.

Our main technical ingredient is a novel structural result---a concavity theorem for VI profiles (see \Cref{sec:concavity}): given any two committees $S,S'$ (not necessarily optimal) of sizes $k_1<k_2$, we can construct, for any $k\in[k_1,k_2]$, a size-$k$ committee contained in $S\cup S'$ whose total score is at least the corresponding linear interpolation of the scores of $S$ and $S'$.
By recursively applying this construction, we can obtain committees of all intermediate sizes in polynomial time.
In particular, it implies that the optimal total score is a concave function of the committee size on VI profiles.
  
To compute an optimal size-$k$ committee (see \Cref{sec:poly-time-alg}), we use a Lagrangian relaxation of a natural ILP formulation, moving the cardinality constraint into the objective.
On VI profiles, the remaining constraint matrix becomes TU, so for any Lagrange multiplier we can compute an integral optimum in polynomial time.
We then perform a binary search over the multiplier values to obtain solutions bracketing size $k$---either it finds a solution of size exactly $k$ or it finds two committees: one of size at most $k$ and one of size at least $k$.
In the latter case, using the concavity theorem, we convert such bracketing solutions into an optimal committee of size exactly $k$.

Finally, we remark that key ideas for our main proofs are obtained via human--AI collaboration through iterative interactions with Gemini.
In particular, both the main idea of using Lagrangian relaxation and the main structural result (concavity theorem) are generated by Gemini---which later were adapted, fully verified and slightly generalized by the authors.
We view this as a concrete case study of AI-assisted research in theoretical computer science open problems.
A detailed account of our interaction with the model is given in \Cref{sec:ai-assistance}.

\section{Preliminaries}

We use the notation $[n] = \{1,2,\dots,n\}$.
Let $C = \{c_1, \dots, c_m\}$ be a set of candidates and $V = \{v_1, \dots, v_n\}$ be a set of voters.
$A_v \subseteq C$ denotes an \emph{approval ballot} of voter $v \in V$.
A collection of approval ballots $A = (A_v)_{v \in V}$ is called an \emph{approval profile}.
Any subset $W \subseteq C$ is called a \emph{committee}.
The \emph{set of supporters} of $c \in C$ is denoted by $V_c \defeq \{ v \in V: c \in A_v \}$.
An \emph{approval-based committee} (ABC) election is usually defined as a tuple $E = (C,V,A,k)$,
where $k \in \mathbb{N}$ is the required committee size.
In this paper, however, we represent an election simply by a pair $(A,k)$,
since from $(A,k)$ we can infer the set of voters and the set of candidates $C = \bigcup_{v \in V} A_v$, i.e.,
the candidates that receive at least one approval.
For the computational problems studied in this paper,
candidates with zero approvals are relevant only in the trivial case $\left| \bigcup_{v \in V} A_v \right| < k$.
An \emph{ABC voting rule} is a function that maps an election to a set of \emph{winning committees} of size $k$.

In this paper, we focus on $w$-Thiele voting rules,
which is a class of ABC voting rules that are parameterized by an infinite non-increasing sequence $w = (w_1, w_2, \dots)$, called a \emph{Thiele sequence}.
Winning committees under $w$-Thiele rules are those that achieve the maximum \emph{total score}:
\[ \max_{W \subseteq C, |W|=k} \score_w(W), \quad\textrm{where } \score_w(W) \defeq \sum_{v \in V} \sum_{i=1}^{|A_v \cap W|} w_i.\]

In the corresponding computational optimization problem \wthiele
one has to find a winning committee (of a given size).
By $\wthiele(A,k)$ we denote the maximum total score achieved by a winning committee of size $k$ on a given instance $(A,k)$ and we use a shortcut $\thiele(k)$ if $A$ and $w$ are clear from the context.

As is standard in the literature, we assume w.l.o.g. that $w_1 = 1$ as scaling the values of $w$ does not change the set of optimal solutions.
Multiwinner Approval Voting (AV) is a Thiele rule with $w = (1,1,\dots)$.
Proportional Approval Voting (PAV) is a Thiele rule with $w_i = 1/i$.
Approval Voting Chamberlin-Courant (CC) is a Thiele rule with $w = (1,0,0,\dots)$.

Our main results hold for a more general problem than \wthiele and, for clarity, we use a separate name:
the \genthiele problem generalizes \wthiele by allowing each voter to use a different Thiele sequence~\cite{SornatWX22_ijcai}:
\genthiele takes as input an approval election (with $n$ voters) and additionally a collection of $n$ Thiele sequences
$w \colon V \times \mathbb{N} \to [0,1]$,
where $(w^v_i)_{i \in \mathbb{N}}$ is a Thiele sequence of voter $v \in V$
(values of weights being in $[0,1]$ are due to normalization; note that we do not require $w^v_1$ to be $1$ for every $v \in V$).
For generality, we assume that all weights $(w_i^v)_{i \in [m], v \in V}$ are given as rational numbers, with numerators and denominators encoded in binary.
$\thiele(A,k,w)$ (and $\thiele(k)$ if $A$ and $w$ are clear from the context) is the maximum total score achieved by a winning committee of size $k$ on a given instance $(A,k,w)$, i.e.,
\[ \max_{W \subseteq C, |W|=k} \score_w(W), \quad\textrm{where } \score_w(W) \defeq \sum_{v \in V} \sum_{i=1}^{|A_v \cap W|} w_i^v.\]

An approval profile is \emph{Voter Interval} (VI)
if there exists an ordering of voters such that every candidate is approved by an interval of voters in this ordering~\cite{ElkindL15_ijcai15,ElkindLP2017trends_book,ElkindLP25_arxiv_preference_restrictions}.
Analogously, an approval profile is \emph{Candidate Interval} (CI)
if there exists an ordering of candidates such that every voter approves an interval of candidates in this ordering.
If a profile is VI (or CI), the respective ordering can be found in linear time~\cite{ElkindL15_ijcai15}.

Both VI and CI profiles can be interpreted as a family\footnote{Throughout the paper, a family is understood as an indexed collection.
In particular, different voters or candidates may correspond to identical approval ballots or supporters sets,
but they are still treated as distinct objects.}
of intervals over $V$ and $C$ respectively.
We define a few types of families of intervals which are used in this paper;
a family $U$ of intervals over $(1,2,\dots,n)$ is:
\begin{itemize}
  \item \emph{inclusion-free} if for every $I,I' \in U$ we have
    $I \not\subseteq I'$ and $I' \not\subseteq I$.
  \item \emph{dominancy-free} if for every $I,I' \in U$ such that $I \subseteq I'$ we have
    $\min(I) = \min(I')$ and $\max(I) = \max(I')$,
    i.e., $I = I'$ \cite{DongBWBE25_aamas25,LassotaSornat_aaai26}.
  \item \emph{endpoint-nested} if for every $I,I' \in U$ such that $I \subseteq I'$ we have
    $\min(I) = \min(I')$ or $\max(I) = \max(I')$.
\end{itemize}

Inclusion-freeness and dominancy-freeness can be defined on any family of sets (non necessarily intervals).
Every family of sets that is inclusion-free is also dominancy-free.
When considering a family of intervals,
by the definitions of these properties, additionally we have that
dominancy-freeness implies endpoint-nestedness.
Moreover, all these implications are strict.

A matrix $M \in \{-1, 0, 1\}^{p \times q}$ is \emph{totally unimodular} (TU) if every square submatrix of $M$ has determinant $-1$, $0$ or $1$.
A matrix $M \in \{0, 1\}^{p \times q}$ has the \emph{consecutive-ones property} (for columns) if its rows can be permuted so that, in every column, the entries equal to $1$ appear in a single consecutive block (possibly empty).
Note that a VI approval profile interpreted as a binary matrix from $\{0, 1\}^{V \times C}$ has consecutive-ones property.
It is a well known fact, that a binary matrix with consecutive-ones property is TU~\cite[p.~279]{Schrijver99_book}.
Moreover, every integer linear program with a TU constraint matrix has integral basic solutions which implies that a linear programming solver can find an integral optimal solution,
i.e., we can solve the ILP with TU constraint matrix in polynomial time.
For more on theory of linear programming we defer to~\cite{Schrijver99_book}.

In the remainder of this section we make a few observations.
First we give an equivalent definition of endpoint-nestedness.
\begin{lemma}\label{lem:en-eq-sorting-endpoints}
  A family $U$ of intervals over $(1,2,\dots,n)$ is endpoint-nested,
  if and only if,
  there exists an ordering $(I_1, I_2, \dots, I_{|U|})$ of intervals from $U$ such that:
  $\min(I_j) \le \min(I_{j+1})$ and
  $\max(I_j) \le \max(I_{j+1})$, for every $j \in [|U|-1]$.
\end{lemma}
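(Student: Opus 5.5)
We prove the two directions separately. Write each interval as $I=[a_I,b_I]$ with $a_I=\min(I)$ and $b_I=\max(I)$. For intervals, $I\subseteq I'$ holds exactly when $a_{I'}\le a_I$ and $b_I\le b_{I'}$.

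\emph{($\Leftarrow$)} Assume an ordering $(I_1,\dots,I_{|U|})$ exists in which both endpoint sequences are non-decreasing. Take $I=I_j\subseteq I'=I_{j'}$. If $j\le j'$, the ordering gives $a_{I}\le a_{I'}$, and containment gives $a_{I'}\le a_I$. Hence $\min(I)=\min(I')$. If $j>j'$, the ordering gives $b_{I'}\le b_I$, and containment gives $b_I\le b_{I'}$. Hence $\max(I)=\max(I')$. Either way the endpoint-nested condition holds.

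\emph{($\Rightarrow$)} Assume $U$ is endpoint-nested. Order the intervals lexicographically by the pair $(a_I,b_I)$, breaking remaining ties arbitrarily. Left endpoints are then non-decreasing by construction, so it remains to show that right endpoints are non-decreasing along this order.

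Suppose $I$ precedes $I'$ but $b_I>b_{I'}$. Then $a_I<a_{I'}$ is impossible only if we rule it out, so consider the two cases. If $a_I=a_{I'}$, the lexicographic order forces $b_I\le b_{I'}$, a contradiction. Otherwise $a_I<a_{I'}$, which together with $b_{I'}<b_I$ gives $I'\subseteq I$. Both endpoints of $I'$ then differ strictly from those of $I$, violating endpoint-nestedness. So $b_I\le b_{I'}$ for every consecutive pair, and the ordering has the required form.

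There is no real obstacle here. The only step needing care is the tie-breaking in the forward direction: sorting lexicographically handles ties in the left endpoint, which is exactly where an arbitrary tie-break could fail.
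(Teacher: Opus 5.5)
Your proof is correct and is essentially the paper's argument: the paper's relation $\succ$ coincides with the lexicographic order on $(\min,\max)$ once endpoint-nestedness excludes $\min(I)<\min(I')$ with $\max(I)>\max(I')$, and its reverse direction is the same two-case containment check. The only cosmetic fix is to delete the garbled phrase ``Then $a_I<a_{I'}$ is impossible only if we rule it out''; the case split that follows is already exhaustive, since lexicographic precedence gives $a_I\le a_{I'}$.
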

\begin{proof}
  $(\implies)$ We define the ordering $\succeq$ of intervals from $U$ as follows:
  for $I,I' \in U$ we set $I \succ I'$ when either:
  1) $\min(I) < \min(I')$ and $\max(I) \le \min(I')$, or
  2) $\min(I) = \min(I')$ and $\max(I) < \min(I')$.
  Additionally, $I,I' \in U$ are in the same equivalency class with respect to $\succeq$ if $I=I'$.
  We note that the case $\min(I) < \min(I')$ and $\max(I) > \min(I')$ cannot appear because then $I' \subseteq I$ and endpoint-nestedness of $U$ requires that $\min(I) = \min(I')$ or $\max(I) = \max(I')$.

  Now, we show that the ordering $(I_1, I_2, \dots, I_{|U|})$ of intervals from $U$ defined by $\succeq$ (i.e., $I_j \succeq I_{j+1}$)
  has the desired property.
  We consider any $j \in [|U|]$.
  From conditions that defines $\succeq$ we have that
  $\min(I_j) < \min(I_{j+1})$ and $\max(I_j) \le \min(I_{j+1})$, or
  $\min(I_j) = \min(I_{j+1})$ and $\max(I_j) < \min(I_{j+1})$, or
  $\min(I_j) = \min(I_{j+1})$ and $\max(I_j) = \min(I_{j+1})$.
  In all cases it is true that
  $\min(I_j) \le \min(I_{j+1})$ and $\max(I_j) \le \min(I_{j+1})$,
  which finishes the proof of the first implication.

  $(\impliedby)$ Let $(I_1, I_2, \dots, I_{|U|})$ be an ordering of intervals from $U$ such that:
  $\min(I_j) \le \min(I_{j+1})$ and
  $\max(I_j) \le \max(I_{j+1})$, for every $j \in [|U|]$.
  Consider any $I_j,I_{j'} \in U$ where $j < j'$. We have $\min(I_j) \leq\min(I_{j'})$ and $\max(I_j) \leq\max(I_{j'})$. Thus, if $I_j \subseteq I_{j'}$, we must have $\min(I_j) = \min(I_{j'})$. Similarly, if $I_{j'} \subseteq I_{j}$, then $\max(I_j) = \max(I_{j'})$. Hence, $U$ is endpoint-nested.
\end{proof}

We introduce a new structured domain of approval profiles, based on the endpoint-nestedness property.

\begin{definition}
  An approval profile $A$ is \emph{Endpoint-Nested} (EN)
  if there exists an ordering of voters such that
  the family of sets of supporters $(V_c)_{c \in C}$ is endpoint-nested over the ordering.
\end{definition}

It is known that a VI profile that is dominancy-free is also CI~\cite{DongBWBE25_aamas25}.
This leads to a polynomial time algorithm for \genthiele on VI dominancy-free profiles using the algorithm for CI profiles~\cite{Peters18_aaai,PetersL20_spoc_jair}.
Below we show that EN profiles,
that form a strict superset of VI dominancy-free profiles,
are also CI, so \genthiele can be solved in polynomial time on EN profiles (\Cref{cor:thiele-en-poly}).
\begin{theorem}\label{thm:en-implies-ci}
  An Endpoint-Nested approval profile is Candidate Interval.
\end{theorem}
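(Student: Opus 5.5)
We will use \Cref{lem:en-eq-sorting-endpoints} to produce an explicit ordering of the candidates and then check that every voter approves a contiguous block in it. Fix an ordering of voters witnessing that $A$ is Endpoint-Nested, so that $(V_c)_{c \in C}$ is an endpoint-nested family of intervals over $(1,\dots,n)$. Candidates with empty supporter sets approve nobody and can be placed at the end of any ordering, so we ignore them. By \Cref{lem:en-eq-sorting-endpoints}, the candidates can be ordered as $c_1,\dots,c_m$ so that both $\min(V_{c_j})$ and $\max(V_{c_j})$ are non-decreasing in $j$. We claim this candidate ordering witnesses that $A$ is Candidate Interval.

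Fix a voter $v$ and indices $j<j'<j''$ with $c_j, c_{j''} \in A_v$. We must show $c_{j'} \in A_v$, i.e., $v \in V_{c_{j'}}$. Since $c_j \in A_v$, we have $v \le \max(V_{c_j})$. Since $c_{j''} \in A_v$, we have $v \ge \min(V_{c_{j''}})$. Monotonicity of the endpoints then gives
\[
\min(V_{c_{j'}}) \le \min(V_{c_{j''}}) \le v \le \max(V_{c_j}) \le \max(V_{c_{j'}}).
\]
Because $V_{c_{j'}}$ is an interval in the voter ordering, this places $v$ in $V_{c_{j'}}$. Hence $A_v$ is a set of consecutive candidates, which is exactly the CI property.

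The only real work is producing an ordering in which both endpoints are simultaneously monotone, and this is the content of \Cref{lem:en-eq-sorting-endpoints}, which we take as given. If one wanted to avoid the lemma, the natural construction is to sort intervals lexicographically by $(\min,\max)$. Endpoint-nestedness rules out a strictly smaller left endpoint paired with a strictly larger right endpoint, because that would be a nesting with both endpoints differing. Lexicographic sorting therefore also makes the right endpoints non-decreasing. Once the ordering is available, the verification above is a one-line sandwich argument, so we expect no obstacle beyond correctly invoking the lemma.
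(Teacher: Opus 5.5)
Your proof is correct and follows the paper's route: you fix the EN voter ordering, invoke \Cref{lem:en-eq-sorting-endpoints} to order the candidates so that left and right endpoints are simultaneously non-decreasing, and then check contiguity. The paper phrases the last step as a prefix ($\min(V_{c_j})\le i$) intersected with a suffix ($\max(V_{c_j})\ge i$) of candidate indices, which is equivalent to your betweenness sandwich; your lexicographic-sort argument for the lemma's forward direction is also correct.
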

\begin{proof}
Let $A$ be an EN approval profile and let $(v_1,v_2,\dots,v_n)$ be the corresponding ordering of voters.
Since EN is VI and we fixed the ordering of voters, instead of $(v_1,v_2,\dots,v_n)$ and $V$ we will use simply voters' indices,
i.e., $(1,2,\dots,n)$ and $[n]$ to denote the order of voters and the set of voters, respectively.

For any arbitrary voter $i \in [n]$,
we have $c \in A_i$
if and only if
$\min(V_c) \le i \le \max(V_c)$.
Using \Cref{lem:en-eq-sorting-endpoints} we can also order candidates into sequence $(c_1,c_2,\dots,c_m)$ such that
$\min(V_{c_j}) \le \min(V_{c_{j+1}})$ and
$\max(V_{c_j}) \le \max(V_{c_{j+1}})$, for every $j \in [m-1]$.
It means that both left and right endpoint sequences are non-decreasing.
This implies that the condition $\min(V_{c_j}) \le i$ defines a prefix of indices $\{1, \dots, p\}$,
and the condition $\max(V_c) \ge i$ defines a suffix of indices $\{q, \dots, m\}$.
Thus, the candidates approved by voter $i$ correspond exactly to a contiguous block of indices $\{q, q+1, \dots, p\}$,
therefore $A$ is CI.
\end{proof}

\begin{corollary}\label{cor:thiele-en-poly}
  \genthiele on Endpoint-Nested profiles can be solved in polynomial time.
\end{corollary}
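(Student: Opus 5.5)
The plan is to reduce \Cref{cor:thiele-en-poly} to the known polynomial-time algorithm for \genthiele on Candidate Interval profiles~\cite{Peters18_aaai,PetersL20_spoc_jair}, using \Cref{thm:en-implies-ci}.

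Given an instance $(A,k,w)$ whose profile $A$ is Endpoint-Nested, we first need an explicit candidate ordering witnessing the CI property, since the CI algorithm works with such an ordering. EN profiles are in particular VI, so a voter ordering in which every $V_c$ is an interval can be found in linear time~\cite{ElkindL15_ijcai15}. This ordering need not be the one witnessing endpoint-nestedness, so we do not rely on it. Instead, \Cref{thm:en-implies-ci} tells us that $A$ is CI, and the CI recognition algorithm of~\cite{ElkindL15_ijcai15} returns, in linear time, a candidate ordering in which every ballot $A_v$ is an interval. Equivalently, when the EN voter ordering is known, we can sort candidates by $(\min(V_c),\max(V_c))$ as in the proof of \Cref{lem:en-eq-sorting-endpoints}.

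With this candidate ordering fixed, we write the standard ILP for \genthiele. It has variables $x_c\in\{0,1\}$ and $y_{v,i}\in[0,1]$ for $i\in[m]$, and the following constraints:
\[
\sum_c x_c = k, \qquad \sum_{i} y_{v,i} \le \sum_{c\in A_v} x_c \quad \text{for every voter } v.
\]
The objective is $\sum_{v,i} w^v_i y_{v,i}$. Because each $w^v$ is non-increasing, an optimal $y$ fills the prefix $i\le |A_v\cap W|$, so the ILP optimum equals $\thiele(A,k,w)$.

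In the CI case the rows $\sum_{c\in A_v}x_c$ and the cardinality row $\sum_c x_c$ are all intervals over the candidate ordering. Hence the $x$-part of the constraint matrix is an interval matrix and is TU. Appending the identity-like columns for the $y$ variables preserves TU. This is exactly Peters' argument~\cite{Peters18_aaai}, which also covers voter-specific weights~\cite{PetersL20_spoc_jair}.

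Solving the LP therefore yields an integral optimal committee in time polynomial in $n$, $m$ and the binary encoding of the rational weights. I expect no real obstacle here. The only point needing care is that the result cited for CI is stated for the generalized, voter-dependent weights. If it is not, the TU argument above must be rechecked, but it is independent of the objective coefficients, so it goes through unchanged.
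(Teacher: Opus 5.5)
Your proposal is correct and is essentially the paper's argument: the corollary is obtained by combining \Cref{thm:en-implies-ci} (an EN profile is CI) with the known polynomial-time, TU-based algorithm for \textsc{Generalized Thiele} on CI profiles~\cite{Peters18_aaai,PetersL20_spoc_jair}, whose TU argument you re-derive correctly. Your caution about not relying on an arbitrary VI ordering is justified, although the algorithm does not strictly need an explicit CI ordering either: total unimodularity is invariant under column permutations, so any vertex-optimal LP solution is already integral.
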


\section{Concavity of the Objective Function}\label{sec:concavity}

In this section we prove that, for a given VI approval profile $A$, $\thiele(k)$ is concave, i.e.,
the maximum achievable total score is a concave function of the committee size (\Cref{cor:thiele-concave}).
This follows from \Cref{thm:combining-solutions} which actually states a more general result and we believe it is of independent interest (therefore we stated it in terms of families of intervals).

One of the instantiations of \Cref{thm:combining-solutions} is a polynomial-time procedure for combining two solutions of sizes $s$ and $s'$ (not necessarily optimal) from a VI profile,
where $s+s'$ is even,
into a solution
(being a subset of union of these two solutions)
of size $(s+s')/2$
that achieves the objective value at least the average of the scores of these two solutions.

\Cref{thm:combining-solutions} is stated in general terms of non-decreasing, concave functions $f_i$'s.
In order to see that this applies to $\genthiele$
it is enough to define $f_v(j) = \sum_{i=1}^j w_i^v$ which, indeed, are non-decreasing and concave functions because $w_j^v \ge w_{j+1}^v$ for all $v \in V, j \in \mathbb{N}$.
Then we have $F(X) = \score_w(X)$ since $c_X(i) = |A_i \cap X|$. 
Note that a solution (winning committee) to \genthiele on VI is a family of intervals (corresponding to candidates selected to the committee).

Recall that we treat families of intervals as multisets,
and so, we interpret set operators such as $\cup$, $\cap$, and $\setminus$ as the corresponding multiset union/intersection/difference (i.e., with multiplicities).

\begin{theorem}\label{thm:combining-solutions}
  Let $f_1, \dots, f_n : \mathbb{Z}_{\ge 0} \to \mathbb{R}_{\ge 0}$ be non-decreasing, concave functions.
  For any family $X$ of intervals over $(1,2,\dots,n)$,
  let $F(X) \defeq \sum_{i=1}^n f_i(c_X(i))$ where $c_X(i) := |\{ I \in X : i \in I\}|$ is the \emph{coverage} at point $i \in [n]$.

Given two families of intervals $S$ and $S'$ over $(1,2,\dots,n)$,
such that $|S| < |S'|$,
there is a polynomial-time algorithm
that computes $P = R \cup R' \cup (S \cap S')$ of size
$|P| \in \left\{\left\lfloor \frac{|S|+|S'|}{2}\right\rfloor, \left\lceil \frac{|S|+|S'|}{2}\right\rceil\right\}$
such that
$R \subseteq S \setminus S', R' \subseteq S' \setminus S$ and
\[
  F(P) \ge (1-\theta) \cdot F(S) + \theta \cdot F(S'),
  \qquad \text{where } \theta \defeq \frac{|P|-|S|}{|S'|-|S|}.
\]
\end{theorem}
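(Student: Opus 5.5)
The plan is to reduce the statement to a \emph{balanced $2$-colouring} of the intervals in the symmetric difference, and then to use an averaging argument over the two colour classes.

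\textbf{Reduction to the symmetric difference.} Let $T \defeq S \cap S'$, $D \defeq S \setminus S'$ and $D' \defeq S' \setminus S$. Every candidate $P$ of the prescribed form contains $T$, so I would replace each $f_i$ by $g_i(x) \defeq f_i(c_T(i)+x)$. Each $g_i$ is still non-decreasing and concave. After this step we may assume $S$ and $S'$ are disjoint, and we work only with the multiset $D \cup D'$.

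\textbf{Step 1: partition with pointwise-balanced coverage.} I would partition $D \cup D'$ into two parts $Y$ and $Z$ such that:
\begin{itemize}
\item at every point $i$ we have $|c_Y(i) - c_Z(i)| \le 1$;
\item $\bigl||Y| - |Z|\bigr| \le 1$.
\end{itemize}
Then set $P \defeq (Y \cup T)$ and $Q \defeq (Z \cup T)$, with $R \defeq Y \cap D$ and $R' \defeq Y \cap D'$ (and analogously for $Q$). Both have sizes in $\{\lfloor (|S|+|S'|)/2\rfloor, \lceil (|S|+|S'|)/2\rceil\}$.

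\textbf{Step 2: concavity bound pointwise.} At each point $i$, write $a = c_S(i)$ and $b = c_{S'}(i)$. The coverages $c_P(i)$ and $c_Q(i)$ sum to $a+b$ and differ by at most one, so they are $\lfloor (a+b)/2\rfloor$ and $\lceil (a+b)/2\rceil$ in some order. For integer-concave $f$ and $a \le b$,
\[
f\bigl(\lfloor (a+b)/2 \rfloor\bigr) + f\bigl(\lceil (a+b)/2 \rceil\bigr) \ge f(a)+f(b),
\]
because the increments of $f$ are non-increasing. Summing over $i$ gives $F(P)+F(Q) \ge F(S)+F(S')$.

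\textbf{Step 3: averaging over $P$ and $Q$.} Let $\theta_P$ and $\theta_Q$ be the interpolation parameters of $P$ and $Q$. Since $|P|+|Q| = |S|+|S'|$, we get $\theta_P + \theta_Q = 1$. Hence the two targets $(1-\theta_P)F(S)+\theta_P F(S')$ and $(1-\theta_Q)F(S)+\theta_Q F(S')$ also sum to $F(S)+F(S')$. So at least one of $P$ and $Q$ meets its own target. The algorithm outputs that one. Everything here is clearly polynomial once the colouring is computed.

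\textbf{Main obstacle: the balanced colouring of Step 1.} Pointwise balance alone follows from total unimodularity. The point–interval incidence matrix has the consecutive-ones property, so it is TU by \cite{Schrijver99_book}, and Ghouila-Houri's characterization yields a $\pm1$ colouring with every row sum in $\{-1,0,1\}$. An explicit version is an Eulerian orientation: map each interval $[a,b]$ to an edge $(a-1,b)$ on vertices $0,\dots,n$, pair up odd-degree vertices with auxiliary edges, and colour each interval by the direction in which an Euler tour traverses it. The imbalance at point $i$ is then the net flow across the cut between $i-1$ and $i$, which is at most $1$.

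The difficulty is adding the cardinality row, which is an all-ones row and destroys the interval structure. Naive alternation by left endpoint fails on nested intervals. What I would try first is an exchange argument:
\begin{itemize}
\item start from a pointwise-balanced colouring;
\item if $|Y|-|Z| \ge 2$, find an alternating structure (a cycle or path in the Euler-tour graph, or in the auxiliary pairing) whose reversal changes $|Y|-|Z|$ by $2$;
\item show that this reversal keeps every cut imbalance within $\{-1,0,1\}$.
\end{itemize}
A natural candidate is the auxiliary pairing edges. Their placement can be chosen freely, for instance by adding a single virtual vertex joined to all odd vertices. Reversing the tour segment between two consecutive visits to that vertex should flip the colour count without creating a cut imbalance larger than one.

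If this fails, the fallback is to verify directly that the system
\[
\lfloor c/2 \rfloor \le Mx \le \lceil c/2 \rceil, \qquad \mathbf{1}^\top x \in \bigl[\lfloor N/2 \rfloor, \lceil N/2 \rceil\bigr], \qquad 0 \le x \le 1,
\]
has integral vertices. Here $c$ denotes the vector of coverages of $D \cup D'$ and $N \defeq |D \cup D'|$. The system is feasible at $x = \tfrac12\mathbf{1}$. For integrality I would express its constraint matrix as a network matrix by viewing the cardinality row as coverage of an extra virtual point. This is the step I expect to need the most care.
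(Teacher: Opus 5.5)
There is a genuine gap, and it sits exactly at the step you flagged: the partition required in Step~1 does not exist in general. Take $S=\{[1,3]\}$ and $S'=\{[1,1],[2,2],[3,3]\}$. Then $T=\emptyset$, $|S|=1$, $|S'|=3$, and you need $|Y|=|Z|=2$. Each point $i\in\{1,2,3\}$ is covered exactly twice, by $[1,3]$ and by $[i,i]$. Pointwise balance therefore forces $[1,3]$ into one colour class and all three singletons into the other, which is a $3$--$1$ split.

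Your incidence matrix with the cardinality row appended is precisely the non-totally-unimodular example quoted in the appendix. Your fallback system fails on it outright. There $Mx=(1,1,1)$ and $\mathbf 1^\top x=2$ force $x=\tfrac12\mathbf 1$ as the \emph{unique} feasible point, so the polytope contains no integral point. No exchange or Euler-tour argument can repair the colouring either.

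A warning sign is that your Steps~2--3 use only concavity of the $f_i$. Take instead the concave but non-monotone $f_i(x)=\min(x,2-x)$. On this same instance $F(S)=F(S')=3$, while every $2$-element subfamily of $S\cup S'$ has value $2$. So any correct proof must use monotonicity somewhere, and yours never does.

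The missing idea is that the two ``halves'' you average over need not be complementary subfamilies of $U=S\cup S'$. The paper proceeds in three moves:
\begin{enumerate}
\item It builds an auxiliary family $U^*$ with the same coverage profile as $U$. Sort all left endpoints and all right endpoints independently, and pair the $j$-th smallest of each; one checks that $l_j\le r_j$.
\item $U^*$ is endpoint-nested, so the intervals covering any point form a contiguous block of indices. Splitting by index parity therefore gives a partition $(A^*,B^*)$ that is balanced both pointwise and in cardinality. Your Steps~2--3 then apply verbatim and select some $P^*\in\{A^*,B^*\}$ that meets its own target.
\item $P^*$ is pulled back into $U$ by reversing the uncrossing. $U$ is reachable from $U^*$ by $O(|U|^2)$ swaps, each replacing $[l_a,r_a],[l_b,r_b]$ with $l_a\le l_b\le r_a\le r_b$ by their union and intersection. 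Whenever exactly one interval of a swapped pair lies in the current selection, you replace it by the union. This preserves the size and never decreases coverage, and this is exactly where monotonicity of the $f_i$ is used.
\end{enumerate}
Your reduction to the disjoint case via $g_i(x)=f_i(c_T(i)+x)$ matches the paper's and is fine.
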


\begin{proof}
First we prove the theorem for the case that $S \cap S' = \emptyset$.
In the last part of the proof, we show how to handle the case $S \cap S' \ne \emptyset$.

\paragraph{The Case $S \cap S' = \emptyset$.}
Let $U \defeq S \cup S'$ be the union of the given families, which has a total size of $u \defeq |S|+|S'|$.

\paragraph{Step 1: The Canonical Endpoint-Nested Family $U^*$.}
Extract all $u$ left endpoints and $u$ right endpoints from the intervals in $U$,
and independently sort them into non-decreasing sequences:
\begin{align*}
L &= (l_1, l_2, \dots, l_{u}) &\textrm{such that} \quad &l_1 \le l_2 \le \dots \le l_{u},\\
R &= (r_1, r_2, \dots, r_{u}) &\textrm{such that} \quad &r_1 \le r_2 \le \dots \le r_{u}.
\end{align*}

We define a \emph{canonical family} of intervals $U^* \defeq \{ J_j = [l_j, r_j] \}_{j=1}^{u}$ by pairing the $j$-th smallest left and right endpoints. 
\Cref{claim:ustar-proper-int} verifies validity of definition of $U^*$ by showing that the intervals $\{ J_j = [l_j, r_j] \}_{j=1}^{u}$ are well defined.

\begin{claim}\label{claim:ustar-proper-int}
  For all $j \in [u]$ it holds that $l_j \le r_j$.
\end{claim}
\begin{proof}[Proof of \Cref{claim:ustar-proper-int}]
  For $i \in [n]$ we define $U_L(i)$ as the number of intervals in $U$
  that have left endpoint being strictly less than $i$, i.e.,
  $U_L(i) = \{ I \in U : \min(I) < i \}$.
  Analogously for right endpoints we define $U_R(i) = \{ I \in U : \max(I) < i \}$.
  For every $i \in [n]$ we have
  $U_R(i) \subseteq U_L(i)$ since for every $I \in U_R(i)$ we have
  $i > \max(I) \ge \min(I)$,
  hence also $I \in U_L(i)$.
  Now, for any $j \in [u]$, $l_j$ is the $j$-th smallest left endpoint in $U$.
  It implies that at most $j-1$ intervals in $U$ have left endpoint being strictly less than $l_j$, hence
  $j-1 \ge |U_L(l_j)| \ge |U_R(l_j)|$.
  It means that also at most $j-1$ intervals in $U$ have right endpoint being strictly less than $l_j$,
  therefore the $j$-th smallest right endpoint is at least $l_j$, i.e.,
  $r_j \geq l_j$.
\end{proof}

We note that the canonical family of intervals $U^*$ is endpoint-nested since the definition of $U^*$ matches the condition of \Cref{lem:en-eq-sorting-endpoints}.

\paragraph{Step 2: Alternating Partition of $U^*$.}
Since $U^*$ is endpoint-nested,
using the same arguments as in the proof of \Cref{thm:en-implies-ci} we obtain that
the intervals in $U^*$ covering $i$ correspond exactly to a contiguous block of indices $\{q, q+1, \dots, p\}$, for some $q,p \in [u]$.

We partition $U^*$ into two families $A^*$ and $B^*$ with sizes
$\lfloor u/2 \rfloor$ and $\lceil u/2 \rceil$ respectively,
according to index parity:
\begin{align*}
    A^* &\defeq \{J_2, J_4, \dots\} = \{J_j \in U^*: j \text{ is even}\},\\
    B^* &\defeq \{J_1, J_3, \dots\} = \{J_j \in U^*: j \text{ is odd}\}.
\end{align*}
We call such $(A^*,B^*)$ an \emph{alternating partition of $U^*$}.
Because any contiguous sequence of integers alternates between even and odd,
the coverage made by intervals from $A^*$ and $B^*$ at any point $i \in [n]$ must differ by at most $1$:
\begin{align}
  |c_{A^*}(i) - c_{B^*}(i)| \le 1.\label{eq:astar-bstar-balanced}
\end{align}

\paragraph{Step 3: The Alternating Partition Yields the Convexity Bound.}
Since $(S,S')$ is a partition of $U$ and $(A^*,B^*)$ is a partition of $U^*$, we have
$c_{S}(i) + c_{S'}(i) = c_{U}(i)$
and
$c_{A^*}(i) + c_{B^*}(i) = c_{U^*}(i)$.
Furthermore, because $U$ and $U^*$ share the same endpoint sets $L$ and $R$,
their total point-wise coverage is identical:
$c_{U^*}(i) = c_U(i)$.
Putting these equalities together we obtain
$c_{S}(i) + c_{S'}(i) = c_{U}(i) = c_{A^*}(i) + c_{B^*}(i)$,
for every $i \in [n]$.

For every $i \in [n]$, the function
$x \mapsto f_i(x) + f_i(c_{U}(i)-x)$
is concave (since $f_i$ is concave)
and symmetric around $c_{U}(i)/2$.
Therefore, it is maximized when the split of $c_{U}(i)$ into $(x,c_{U}(i)-x)$ is as balanced as possible\footnote{This inequality is in fact a special case of Karamata's inequality~\cite{karamata32,marshall2011inequalities}.
}, i.e.,
when the two arguments in $(x,c_{U}(i)-x)$ differ by at most $1$.
Since this property holds for $(A^*,B^*)$ (due to~\Cref{eq:astar-bstar-balanced}),
we get
\[
  f_i(c_{A^*}(i)) + f_i(c_{B^*}(i)) = \max_{x \in \{0,1,\dots,c_{U}(i)\}} \big(f_i(x) + f_i(c_{U}(i)-x)\big) \ge f_i(c_S(i)) + f_i(c_{S'}(i)), \quad \text{for all } i\in [n].
\]
Summing the above inequality over all $i \in [n]$,
we obtain 
\begin{align}
  F(A^*) + F(B^*) \ge F(S) + F(S').\label{eq:astar-bstar-optimal}
\end{align}
Let
$\theta_{A^*} \defeq \frac{|A^*|-|S|}{|S'|-|S|}$
and
$\theta_{B^*} \defeq \frac{|B^*|-|S|}{|S'|-|S|}$.
Since \(|A^*|+|B^*| = u = |S|+|S'|\), we have
\begin{align}
  \theta_{A^*} + \theta_{B^*} = 1.\label{eq:lambdas-sums-to-one}
\end{align}
If
$F(A^*) \ge (1-\theta_{A^*}) \cdot F(S) + \theta_{A^*} \cdot F(S')$,
then we set $P^* \defeq A^*$.
Otherwise,
$F(A^*) < (1-\theta_{A^*}) \cdot F(S) + \theta_{A^*} \cdot F(S')$
and we obtain
\begin{align*}
  F(B^*)
  &\stackrel{\eqref{eq:astar-bstar-optimal}}{\ge} F(S)+F(S')-F(A^*)\\
  &> F(S)+F(S')-\big((1-\theta_{A^*}) \cdot F(S)+\theta_{A^*} \cdot F(S')\big)\\
  &= \theta_{A^*} \cdot F(S)+(1-\theta_{A^*}) \cdot F(S')\\ &
  \stackrel{\eqref{eq:lambdas-sums-to-one}}{=} (1-\theta_{B^*}) \cdot F(S)+\theta_{B^*} \cdot F(S').
\end{align*}
Hence, in this case we set $P^* \defeq B^*$.
Therefore, there exists \(P^* \in \{A^*,B^*\}\) which satisfies the inequality from the theorem statement, i.e.,
\begin{equation}
  F(P^*) \ge (1-\theta_{P^*}) \cdot F(S) + \theta_{P^*} \cdot F(S'),
  \qquad \text{where } \theta_{P^*} \defeq \frac{|P^*|-|S|}{|S'|-|S|}.\label{eq:pstar-concavity-bound}
\end{equation}
Moreover,
$|P^*| \in \{|A^*|,|B^*|\} = \{\lfloor u/2 \rfloor, \lceil u/2 \rceil\}$ as desired.
This does not finish the proof of the theorem because $P^* \subseteq U^*$, i.e.,
$P^*$ is a subset of canonical family, but we need to construct a subset of original input intervals $U = S \cup S'$ (recall that we consider the case $S \cap S' = \emptyset$).
In the next step, we will show how to transform any subset of $U^*$ to a subset of $U$ without decreasing its coverage at any point $i \in [n]$.

\paragraph{Step 4: Transformation Back to $U$.}
In order to construct $P \subseteq U$ based on $P^* \subseteq U^*$ we rely on the following structural lemma,
which shows how to transform $U^*$,
the canonical form of $U$,
back into $U$ via swaps of right endpoints.
This lemma applies to any family of intervals but we state it in terms of $U$ and its canonical form $U^*$ for concreteness.

\begin{lemma}[Transformation via Right-Endpoint Swaps]\label{lem:trans-right-endpoint-swap}
Any family of intervals $U$ can be obtained from its canonical endpoint-nested form $U^*$ by a sequence of at most $O(|U|^2)$ \emph{right-endpoint swap operations},
where such an operation replaces two intervals
$I_1 = [l_a, r_a]$ and $I_2 = [l_b, r_b]$
satisfying $l_a \le l_b \le r_a \le r_b$
with the two intervals
$I_\cup = [l_a, r_b]$ and $I_\cap = [l_b, r_a]$,
that is, with their union and intersection.
\end{lemma}
\begin{proof}[Proof of \Cref{lem:trans-right-endpoint-swap}]
We show the reverse:
$U$ can be transformed into its canonical endpoint-nested form $U^*$ by a finite sequence of right-endpoint swap operations.
Consider the intervals in $U$ ordered by their left endpoints $l_1 \le \dots \le l_{|U|}$.
Let $r'_1, \dots, r'_{|U|}$ be their corresponding right endpoints.
If $U \neq U^*$\footnote{Note that every interval family $U$ has a unique canonical form $U^*$.},
the right endpoints are not perfectly sorted,
meaning there exists an adjacent inversion:
an index $i$ such that $r'_i > r'_{i+1}$.
Because $l_i \le l_{i+1} \le r'_{i+1} < r'_i$,
the interval $[l_{i+1},r'_{i+1}]$ is strictly contained in $[l_i,r'_i]$.
A right-endpoint swap operation replaces them with $[l_i, r'_{i+1}]$ and $[l_{i+1}, r'_i]$.
This strictly decreases the number of inversions in the sequence of right endpoints.
Since the number of inversions is at most $O(|U|^2)$,
this process must terminate,
which occurs precisely when the right endpoints become sorted,
i.e., when the resulting family is $U^*$.
Reversing this process provides the required sequence of right-endpoint swap operations from $U^*$ to $U$.
\end{proof}

Let $U^* = U_0 \to U_1 \to \dots \to U_T = U$ be the transformation sequence of right-endpoint swap operations from \Cref{lem:trans-right-endpoint-swap}.
We create a sequence of families $P_t \subseteq U_t$,
initialized with $P_0 = P^*$,
such that $|P_t| = |P^*|$ for every $t \in \{0,1,\dots,T\}$.
For each transition from $U_t$ to $U_{t+1}$ replacing $\{I_1, I_2\} \subseteq U_t$ with $\{I_\cup, I_\cap\} \subseteq U_{t+1}$ we proceed as follows:
\begin{itemize}
    \item \textbf{Case 1:} If both $I_1, I_2 \in P_t$,
    we set $P_{t+1} = (P_t \setminus \{I_1, I_2\}) \cup \{I_\cap, I_\cup\}$.
    The point-wise coverage contribution of both pairs is identical, i.e.,
    $c_{\{I_1,I_2\}}(i) = c_{\{I_\cup,I_\cap\}}(i)$ for every $i \in [n]$,
    so the point-wise coverage of $P_t$ and $P_{t+1}$ is the same,
    hence $F(P_{t+1}) = F(P_t)$.
    \item \textbf{Case 2:} If neither is in $P_t$, we set $P_{t+1} = P_t$,
    so $F(P_{t+1}) = F(P_t)$.
    \item \textbf{Case 3:} If exactly one is in $P_t$ (call it $I_a$), we set $P_{t+1} = (P_t \setminus \{I_a\}) \cup \{I_\cup\}$.
    Because $I_\cup$ contains $I_a$, the coverage at every point $i \in [n]$ does not decrease, i.e.,
    $c_{I_\cup}(i) \ge c_{I_a}(i)$,
    so $F(P_{t+1}) \ge F(P_t)$.
\end{itemize}
Since $|P_T| = |P^*|$, the constructed sequence of families $P_t \subseteq U_t$ satisfies:
\begin{align}
  F(P_T) &\ge \dots \ge F(P_1) \ge F(P_0) = F(P^*)\nonumber\\
         &\stackrel{\eqref{eq:pstar-concavity-bound}}{\ge} (1-\theta_{P_T}) \cdot F(S) + \theta_{P_T} \cdot F(S'),\qquad
  \text{where } \theta_{P_T} \defeq \frac{|P_T|-|S|}{|S'|-|S|}.\label{eq:p-sat-concavity}
\end{align}

Family $P \defeq P_T$ fulfills all constraints of the theorem statement:
$P$ satisfies the concavity inequality due to \Cref{eq:p-sat-concavity};
$|P| \in \{|A^*|,|B^*|\} = \{\lfloor u/2 \rfloor, \lceil u/2 \rceil\}$;
and $P = R \cup R' \cup (S \cap S')$ with
$R \defeq P \cap S \subseteq S \setminus S'$ and
$R' \defeq P \cap S' \subseteq S' \setminus S$.
This concludes the proof for the case $S \cap S' = \emptyset$.

\paragraph{The Case $S \cap S' \neq \emptyset$.}
Finally, consider the general case when $S \cap S' \neq \emptyset$.
Let $D \defeq S \cap S'$ and for every $i \in [n]$ we define
\[
  g_i(x) \defeq f_i(x+c_D(i)).
\]
Since each $f_i$ is non-decreasing and concave, so is each $g_i$.
Let $S_0 \defeq S \setminus D$ and $S'_0 \defeq S' \setminus D$.
Then $S_0 \cap S'_0 = \emptyset$ and $|S_0| < |S'_0|$.
Applying the disjoint-case construction to $S_0$ and $S'_0$ with functions $g_1,\dots,g_n$,
we obtain a family $P_0 \subseteq S_0 \cup S'_0$ such that
\begin{align}
  |P_0| \in \left\{ \left\lfloor \frac{|S_0|+|S'_0|}{2} \right\rfloor, \left\lceil \frac{|S_0|+|S'_0|}{2} \right\rceil \right\}\label{eq:p0-size}
\end{align}
and, for $G(X) \defeq \sum_{i=1}^n g_i(c_X(i))$ for a family $X$ of intervals over $(1,2,\dots,n)$, we have
\begin{align}
  G(P_0) \ge (1-\theta_{P_0}) \cdot G(S_0) + \theta_{P_0} \cdot G(S'_0),\qquad
  \text{where } \theta_{P_0} \defeq \frac{|P_0|-|S_0|}{|S'_0|-|S_0|}.\label{eq:g-on-p0-satisfy-concavity}
\end{align}
Now let $P \defeq P_0 \cup D = P_0 \cup (S \cap S')$,
so $P$ has the required form and size since
\begin{align*}
  |P| = |P_0|+|D| &\stackrel{\eqref{eq:p0-size}}{\in} \left\{\left\lfloor \frac{|S_0|+|S'_0|+2|D|}{2} \right\rfloor, \left\lceil \frac{|S_0|+|S'_0|+2|D|}{2} \right\rceil\right\}\\
  &\hspace{42pt}= \left\{ \left\lfloor \frac{|S|+|S'|}{2} \right\rfloor, \left\lceil \frac{|S|+|S'|}{2} \right\rceil \right\}.
\end{align*}
By the definition of $g_i(\cdot)$ and $c_X(i)$ we have
\begin{align}
  F(P) = F(P_0 \cup D)
      &= \sum_{i=1}^n f_i(c_{P_0 \cup D}(i))
      = \sum_{i=1}^n f_i(c_{P_0}(i) + c_{D}(i))
      = \sum_{i=1}^n g_i(c_{P_0}(i)) = G(P_0)\label{eq:f-on-p-eq-g-on-p0}
\end{align}
Using the analogous chain of equalities for $S = S_0 \cup D$ and $S' = S'_0 \cup D$, we obtain
\begin{align}
  F(S) = G(S_0) \qquad\text{and}\qquad F(S') = G(S'_0)\label{eq:f-on-s-eq-g-on-s0}
\end{align}
This implies
\begin{align}
  F(P) \stackrel{\eqref{eq:f-on-p-eq-g-on-p0}}{=}
         G(P_0)
       \stackrel{\eqref{eq:g-on-p0-satisfy-concavity}}{\ge}
         (1-\theta_{P_0}) \cdot G(S_0) + \theta_{P_0} \cdot G(S'_0)
       \stackrel{\eqref{eq:f-on-s-eq-g-on-s0}}{=}
         (1-\theta_{P_0}) \cdot F(S) + \theta_{P_0} \cdot F(S').\label{eq:f-on-p-sat-concavity-lambda0}
\end{align}
Moreover,
\begin{align}
  \theta_{P_0}
  \stackrel{\eqref{eq:g-on-p0-satisfy-concavity}}{=}
  \frac{|P_0|-|S_0|}{|S'_0|-|S_0|}
  = \frac{(|P_0|+|D|)-(|S_0|+|D|)}{(|S'_0|+|D|)-(|S_0|+|D|)}
  = \frac{|P|-|S|}{|S'|-|S|}.\label{eq:lambda0-eq-lambda}
\end{align}
Hence
\[
  F(P) \stackrel{\eqref{eq:f-on-p-sat-concavity-lambda0},\eqref{eq:lambda0-eq-lambda}}{\ge}
  (1-\theta) \cdot F(S) + \theta \cdot F(S'),\qquad
  \text{where } \theta \defeq \frac{|P|-|S|}{|S'|-|S|}.
\]
This completes the proof.
\end{proof}

For two families of intervals $S$ and $S'$ given,
by applying \Cref{thm:combining-solutions} recursively,
in polynomial time we can construct a family of solutions of all intermediate sizes $|S|, |S|+1, \dots, |S'|$ that satisfy the concavity property from \Cref{thm:combining-solutions}.
In order to construct a solution of a given size $p$, $|S| < p < |S'|$ it is enough to apply \Cref{thm:combining-solutions} iteratively at most $O(\log(|S'|-|S|))$ times (in the binary-search manner).
We state it as a separate proposition below and prove it for completeness.

\begin{proposition}\label{prop:combining-solutions-any-size}
Let $f_1, \dots, f_n, c_X, F, S, S'$ be as in \Cref{thm:combining-solutions}.
Then, for any $k \in [m]$
such that $|S| < k < |S'|$,
there is a polynomial-time algorithm
that computes $P = R \cup R' \cup (S \cap S')$ of size
$k$ such that
$R \subseteq S \setminus S', R' \subseteq S' \setminus S$ and
\[
  F(P) \ge (1-\theta) \cdot F(S) + \theta \cdot F(S'),
  \qquad \text{where } \theta \defeq \frac{|P|-|S|}{|S'|-|S|}.
\]
\end{proposition}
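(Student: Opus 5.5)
We prove the proposition by a binary search over sizes, maintaining a bracketing pair of families. The invariant is that each family in the pair satisfies the interpolation inequality relative to the original $S,S'$ and lies in $S\cup S'$ in the required form.

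\textbf{Setup.} Write $\phi(x)\defeq (1-\theta_x)F(S)+\theta_x F(S')$ with $\theta_x\defeq\frac{x-|S|}{|S'|-|S|}$; this is an affine function of $x$. We maintain two families $T\subseteq S\cup S'$ and $T'\subseteq S\cup S'$ with the following properties:
\begin{itemize}
\item $|T|\le k\le |T'|$;
\item $F(T)\ge\phi(|T|)$ and $F(T')\ge \phi(|T'|)$;
\item $S\cap S'\subseteq T\cap T'$, and $T, T'$ only use intervals of $S\cup S'$ with multiplicities at most those in $S\cup S'$.
\end{itemize}
Initially $T=S$ and $T'=S'$. If at any point $|T|=k$ or $|T'|=k$, we output that family.

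\textbf{Iteration.} Given $|T|<k<|T'|$, apply \Cref{thm:combining-solutions} to $(T,T')$. It returns $Q$ with $|Q|\in\{\lfloor(|T|+|T'|)/2\rfloor,\lceil(|T|+|T'|)/2\rceil\}$ and
\[ F(Q)\ge(1-\theta')F(T)+\theta' F(T'),\qquad \theta'\defeq\frac{|Q|-|T|}{|T'|-|T|}. \]
Since $\phi$ is affine, $\phi(|Q|)=(1-\theta')\phi(|T|)+\theta'\phi(|T'|)$. Combining this with the invariant gives $F(Q)\ge\phi(|Q|)$. If $|Q|\le k$ we replace $T$ by $Q$, otherwise we replace $T'$ by $Q$.

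Because $|T'|-|T|\ge2$ whenever $k$ lies strictly between the two sizes, $|Q|$ is strictly between $|T|$ and $|T'|$. The gap therefore at least roughly halves each round, giving $O(\log(|S'|-|S|))$ rounds.

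\textbf{Termination.} The loop ends when the gap is $0$ or $1$. A gap of $0$ cannot occur with $|T|<k<|T'|$, and a gap of $1$ with $|T|\le k\le|T'|$ forces one of the two sizes to equal $k$. So the loop always ends with a family of size exactly $k$ satisfying $F\ge\phi(k)$, which is the required inequality.

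\textbf{Structural form.} This is the step needing the most care, because the theorem's guarantee is stated relative to $T$ and $T'$, not $S$ and $S'$. By the theorem, $Q=R\cup R'\cup(T\cap T')$ with $R\subseteq T\setminus T'$ and $R'\subseteq T'\setminus T$, so $Q\subseteq T\cup T'$ as multisets. I would strengthen the invariant to $T,T'\subseteq S\cup S'$ with multiplicity at most $\max$ of the multiplicities in $S$ and $S'$; under multiset union $T\cup T'$ also respects this bound. Then $Q\subseteq S\cup S'$ in that sense, and $Q\supseteq T\cap T'\supseteq S\cap S'$ preserves the containment of $S\cap S'$.

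Hence the output $P$ satisfies $S\cap S'\subseteq P\subseteq S\cup S'$. Setting $R\defeq (P\setminus(S\cap S'))\cap S$ and $R'\defeq P\setminus(S\cap S')\setminus R$, we get $R\subseteq S\setminus S'$ and $R'\subseteq S'\setminus S$, as required. Each round costs one polynomial-time call to the theorem, so the whole algorithm runs in polynomial time.
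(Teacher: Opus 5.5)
Your proof is correct and follows essentially the same route as the paper. Both use a bracketing pair with $|T|\le k\le |T'|$, repeated calls to \Cref{thm:combining-solutions}, affinity of the interpolation (the paper's ``direct calculation'') to carry the invariant back to $S,S'$, and $O(\log(|S'|-|S|))$ rounds; your explicit multiplicity bound is, if anything, more careful than the paper's ``$M\subseteq X\cup Y\subseteq S\cup S'$''.

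One small fix is needed in your final bookkeeping if intervals can genuinely repeat (rather than being distinct indexed candidates, as the paper's footnote stipulates). Your $R=(P\setminus(S\cap S'))\cap S$ can then contain copies of an interval that appears in $S$ but more often in $S'$, so $R$ need not lie in $S\setminus S'$. Take instead $R=(P\setminus(S\cap S'))\cap(S\setminus S')$ and $R'=(P\setminus(S\cap S'))\cap(S'\setminus S)$. This works because your invariant gives $\min(a,b)\le p\le\max(a,b)$ for the multiplicities $a,b,p$ of any interval in $S,S',P$.
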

\begin{proof}
We define a recursive procedure on pairs $(X,Y)$ of families of intervals over $(1,2,\dots,n)$ satisfying
$S \cap S' \subseteq X \cap Y \subseteq S \cup S'$
and $|X| \le k \le |Y|$.
The procedure maintains the following invariant
(the inequality from the proposition statement)
for both families $X$ and $Y$:
\begin{equation}
  F(Z) \ge (1-\theta_Z) \cdot F(S) + \theta_Z \cdot F(S'),
  \qquad \text{where } Z \in \{X,Y\} \:\: \text{and } \theta_Z \defeq \frac{|Z|-|S|}{|S'|-|S|}.\label{eq:invariant-xy}
\end{equation}

Initially, we call the procedure on $(S,S')$
and the invariant holds with equality. 
We apply \Cref{thm:combining-solutions} to $X$ and $Y$.
It returns a family
$M = X' \cup Y' \cup (X \cap Y)$
such that
$X' \subseteq X \setminus Y, Y' \subseteq Y \setminus X$, and
$|M| \in \left\{\left\lfloor \frac{|X|+|Y|}{2} \right\rfloor, \left\lceil \frac{|X|+|Y|}{2}\right\rceil\right\}$.
Moreover,
\begin{equation}
  F(M) \ge (1-\eta) \cdot F(X) + \eta \cdot F(Y),
  \qquad \text{where } \eta \defeq \frac{|M|-|X|}{|Y|-|X|}.\label{eq:m-above-xy-interpolation}
\end{equation}
Since $S \cap S' \subseteq X \cap Y$,
we also have $S \cap S' \subseteq M$,
and clearly $M \subseteq X \cup Y \subseteq S \cup S'$.
Substituting the invariant bounds for $F(X)$ and $F(Y)$ in \Cref{eq:m-above-xy-interpolation} yields
\begin{equation}
  F(M)
  \stackrel{\eqref{eq:invariant-xy},\eqref{eq:m-above-xy-interpolation}}{\ge}
  (1-\eta) \cdot \bigl( (1-\theta_X) \cdot F(S) + \theta_X \cdot F(S') \bigr)
  +
  \eta \cdot \bigl( (1-\theta_Y) \cdot F(S) + \theta_Y \cdot F(S')\bigr).\label{eq:m-above-xy-in-relation-to-s-sprim}
\end{equation}
A direct calculation shows that
$(1-\eta) \cdot \theta_X + \eta \cdot \theta_Y = \frac{|M|-|S|}{|S'|-|S|}$,
and therefore
\[
  F(M)
  \stackrel{\eqref{eq:m-above-xy-in-relation-to-s-sprim}}{\ge}
  (1-\theta_M)\cdot F(S)+\theta_M\cdot F(S'),
  \qquad
  \theta_M \defeq \frac{|M|-|S|}{|S'|-|S|}.
\]
Thus the invariant \eqref{eq:invariant-xy} is preserved if we replace one of the endpoints by $M$:
if $k < |M|$, we recurse on $(X,M)$;
if $k > |M|$, we recurse on $(M,Y)$;
otherwise, if $|M| = k$, we simply return $P \defeq M$.

In each recursive step, the interval of feasible sizes shrinks from length $|Y|-|X|$ to at most
$\left\lceil \frac{|Y|-|X|}{2} \right\rceil$.
Hence after at most $O(\log(|S'|-|S|))$ recursive calls we reach a family $M^*$ of size exactly $k$.
By construction, $S \cap S' \subseteq M^* \subseteq S \cup S'$.
Thus, if we define $P \defeq M^*$ and
$R \defeq M^* \cap (S \setminus S'), R' \defeq M^* \cap (S' \setminus S)$,
then $P = R \cup R' \cup (S \cap S')$,
with $R \subseteq S \setminus S'$ and $R' \subseteq S' \setminus S$ as required.
Finally, the invariant \eqref{eq:invariant-xy} for the family $M^*$ is equivalent to the inequality from the theorem statement on $P \defeq M^*$.
This finishes the proof.
\end{proof}

By applying \Cref{thm:combining-solutions} on two optimal solutions $S$ and $S'$ of sizes $k-1$ and $k+1$ respectively,
we obtain concavity of the objective $\thiele(k)$ as a function of the committee size $k$.
\begin{corollary}\label{cor:thiele-concave}
  For a given approval profile $A$ and a collection of Thiele sequences $w$,
  the maximum total score achieved by a winning committee of size $k$ under the \genthiele rule is a concave function of $k$, i.e.,
  for every $k \geq 1$ we have
  \[ \thiele(k) \geq \frac{\thiele(k+1)+\thiele(k-1)}{2}. \]
\end{corollary}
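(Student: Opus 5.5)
The plan is to apply \Cref{thm:combining-solutions} directly to two optimal committees of sizes $k-1$ and $k+1$. The corollary is implicitly about a VI profile $A$, as in the opening of this section. I fix the voter ordering witnessing the VI property, so that each candidate $c$ corresponds to the interval $V_c$ over $(1,\dots,n)$. I set $f_v(j)=\sum_{i=1}^{j} w_i^v$; these are non-decreasing and concave because $w^v$ is non-increasing and non-negative. A committee $W$ is then the family $(V_c)_{c\in W}$ of intervals, and $F(W)=\score_w(W)$.

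\textbf{Main case.} Assume committees of sizes $k-1$ and $k+1$ both exist, i.e.\ $k+1\le m$. Let $S$ be an optimal committee of size $k-1$ and $S'$ an optimal committee of size $k+1$, so $F(S)=\thiele(k-1)$ and $F(S')=\thiele(k+1)$. Since $|S|+|S'|=2k$ is even, the floor and ceiling in \Cref{thm:combining-solutions} coincide. The theorem therefore returns $P=R\cup R'\cup(S\cap S')$ with $|P|=k$ and $\theta=\frac{k-(k-1)}{2}=\frac12$. This gives
\[ F(P)\ge \tfrac12\bigl(\thiele(k-1)+\thiele(k+1)\bigr). \]

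\textbf{Main obstacle: $P$ must be a genuine committee.} The interval families are multisets, so I must check that $P$ selects each candidate at most once. Identify each interval in the family with its candidate index. Because $R\subseteq S\setminus S'$, $R'\subseteq S'\setminus S$ and $S\cap S'$ are pairwise disjoint as sets of candidates, $P$ uses each candidate at most once. Hence $P$ is a committee of size $k$ with $\score_w(P)=F(P)$, and so $\thiele(k)\ge F(P)$, which is the claimed inequality.

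\textbf{Edge cases.} If $k+1>m$, no committee of size $k+1$ exists, and I would adopt the convention $\thiele(k+1)=-\infty$ (or restrict the statement to $1\le k\le m-1$). For $k=1$, the empty committee gives $\thiele(0)=0$, and the argument above goes through unchanged with $S=\emptyset$, since the theorem places no restriction on $S$ being nonempty.
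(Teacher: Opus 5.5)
Your proposal is correct and matches the paper's proof: apply \Cref{thm:combining-solutions} to optimal committees of sizes $k-1$ and $k+1$ on the (implicitly VI) profile with $f_v(j)=\sum_{i\le j} w^v_i$, which gives $|P|=k$ and $\theta=\tfrac12$. Your extra checks (that the form $P=R\cup R'\cup(S\cap S')$ prevents any candidate from being used twice, and the treatment of $k=1$ and $k+1>m$) are details the paper leaves implicit, and you handle them correctly.
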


\section{Polynomial-Time Algorithm}\label{sec:poly-time-alg}

In this section we prove our main result which implies a polynomial-time solvability of any Thiele voting rule, in particular PAV, on VI.

\begin{theorem}\label{thm:poly-thiele-vi}
  There is a polynomial-time algorithm for \genthiele on Voter Interval profiles.
\end{theorem}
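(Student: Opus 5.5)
We follow the Lagrangian-relaxation route sketched in the introduction. Fix the voter ordering in which every $V_c$ is an interval; it can be found in linear time. We write the natural ILP with variables $x_c \in \{0,1\}$ for candidates and $y_{v,j} \in \{0,1\}$ for $v \in V$, $j \in [m]$. Here $y_{v,j}=1$ means that voter $v$ has at least $j$ representatives. The objective is $\sum_v \sum_j w^v_j y_{v,j}$. The constraints are $\sum_j y_{v,j} \le \sum_{c \in A_v} x_c$ for every $v$, together with $\sum_c x_c = k$. Because $w^v$ is non-increasing, an optimal $y$ always fills the prefix $y_{v,1},\dots,y_{v,|A_v\cap W|}$, so the ILP value equals $\thiele(k)$.

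For a multiplier $\lambda \ge 0$ we drop the cardinality constraint and maximize $L_\lambda(x,y) = \sum_{v,j} w^v_j y_{v,j} - \lambda \sum_c x_c$. The coverage rows have candidate-columns equal to the transpose of the voter--candidate incidence matrix, which has consecutive ones in each column. Each $y$-column is a unit column with sign $-1$, and the box constraints $0\le x,y\le 1$ are identity rows. Appending unit columns and identity rows preserves total unimodularity. So the relaxation is TU and an integral optimum $W_\lambda$ can be found in polynomial time. Every such optimum maximizes $\score_w(W)-\lambda|W|$ over all committees. Hence, with $g(s)=\thiele(s)$, the committee $W_\lambda$ attains $g(|W_\lambda|)$, and $|W_\lambda|$ is a maximizer of $g(s)-\lambda s$.

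By \Cref{cor:thiele-concave}, $g$ is concave. For $\lambda$ between its left and right slopes at $k$, namely $g(k)-g(k-1)$ and $g(k+1)-g(k)$, the point $k$ maximizes $g(s)-\lambda s$. The candidate slopes are differences of $g$ at integer points, which are rationals with polynomially bounded numerators and denominators because the weights are rational. We binary search on $\lambda\in[0,\max w+1]$. At $\lambda=0$ we may take a maximum-size optimum, and at large $\lambda$ we get the empty set. To control ties we use lexicographic tie-breaking: add a tiny perturbation $\pm\varepsilon$ per candidate, or solve a second LP that optimizes $|W|$ over the optimal face. Either way we obtain, at each $\lambda$, the smallest and the largest optimal sizes.

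The search terminates at some $\lambda^*$ with a committee $S$ of size $s\le k$ and a committee $S'$ of size $s'\ge k$, both optimal for $L_{\lambda^*}$. If either has size exactly $k$, we return it. Otherwise, $g(s)-\lambda^* s=g(s')-\lambda^* s'$ and $g$ is linear on $[s,s']$ by concavity. \Cref{prop:combining-solutions-any-size} then gives a size-$k$ committee $P$ with $F(P)\ge(1-\theta)g(s)+\theta g(s')=g(k)$, so $P$ is optimal.

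We expect the main obstacle to be making the binary search rigorously polynomial. We must show that $O(\text{poly})$ bits of precision on $\lambda$ suffice for some iterate to land exactly on the breakpoint slope where both sides are optimal. We would argue this through the bounded denominators of slopes, which are differences of sums of input rationals: the search runs over a grid of spacing below the minimum gap between distinct slopes, and at the exact slope value the LP is solved with exact rational arithmetic. Alternatively, we would use parametric search over the finitely many breakpoints of $L_\lambda$.
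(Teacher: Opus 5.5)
Your proposal is correct and follows essentially the same route as the paper. It uses the Lagrangian relaxation of the cardinality constraint (TU by the consecutive-ones structure), concavity from \Cref{cor:thiele-concave}, and a binary search over multipliers that exploits the fact that all breakpoints are multiples of $1/D$ for a common denominator $D$ of the weights; your exact-breakpoint search with min/max-size tie-breaking plays the role of the paper's final perturbation to $\bar\lambda\pm\varepsilon$. It then applies \Cref{prop:combining-solutions-any-size} to interpolate between the two bracketing committees.

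One minor slip: the upper end of the search range must exceed the largest marginal gain of a single candidate. That gain can be as large as $n\cdot\max w$ (the paper searches over $[0,n]$), so $\lambda=\max w+1$ does not force the empty committee, though this is trivially fixed.
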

\begin{proof}
We describe how to compute an optimal solution to \genthiele instance $(A,k,w)$, where $A$ is VI, in polynomial time.
Excluding trivial cases, we assume that $k \in \{1,2,\dots,m-1\}$.

\paragraph{Standard ILP.}
We use a standard Integer Linear Programming  (ILP) formulation of the problem~\cite{Peters18_aaai,PetersL20_spoc_jair,SornatWX22_ijcai,LassotaSornat_aaai26}:
\begin{align}
    \textrm{maximize}\quad  \sum_{v \in V} \sum_{i \in [k]} w^v_i   &\cdot x_{v,i}                   \label{ilp:objective}\\
    \textrm{subject to}\quad\quad\quad\hspace{0pt} \sum_{c \in C}                  y_c     &= k  \label{ilp:cardinality} \\
                        \sum_{i \in [k]}                x_{v,i} &\leq \sum_{c \in A_v} y_c       \quad\quad\;\forall v \in V \label{ilp:counting}\\
                        x_{v,i} &\in \{0,1\} \quad\quad\quad\forall v \in V, i \in [k] \nonumber\\
                        y_c     &\in \{0,1\} \quad\quad\quad\hspace{1pt}\forall c \in C, \nonumber
\end{align}
where selection variable $y_c = 1$ means selecting candidate $c$ to the committee and counting variable $x_{v,i} = 1$ means that voter $v$ approves at least $i$ selected committee members.
Cardinality constraint \eqref{ilp:cardinality} forces selecting exactly $k$ candidates.
The constraint \eqref{ilp:counting} for $v \in V$ transfers the number of selected approved candidates $\sum_{c \in A_v} y_c$ into counting variables $x_{v,i}$ which are used in the objective function \eqref{ilp:objective}.

\paragraph{Lagrangian Relaxation.}
We modify the ILP by removing the cardinality constraint $\sum_{c \in C} y_c = k$ (\Cref{ilp:cardinality}) and adding penalty $\lambda \cdot (k - \sum_{c \in C} y_c)$ to the objective function, for some value of Lagrange multiplier $\lambda > 0$.
This operation (called Lagrangian relaxation of a constraint) moves a constraint into the objective by penalizing violating the constraint,
where the penalty has linear dependence on the original constraint.
Additionally, we remove the constant term $\lambda \cdot k$ from the objective.
We call the resulting integer linear program as LR and we use notation $\LR{\lambda}$ when we mean LR with a specific value of Lagrange multiplier:
\begin{align}
    \textrm{maximize}\quad  \sum_{v \in V} \sum_{i \in [k]} w^v_i &\cdot x_{v,i} - \lambda \cdot \sum_{c \in C} y_c \label{new-ilp:objective}\\
    \textrm{subject to}\quad\hspace{13pt}\sum_{i \in [k]}                x_{v,i} &\leq \sum_{c \in A_v} y_c       \quad\quad\quad\quad\;\forall v \in V \nonumber\\
                        x_{v,i} &\in \{0,1\} \quad\quad\quad\quad\quad\forall v \in V, i \in [k] \nonumber\\
                        y_c     &\in \{0,1\} \quad\quad\quad\quad\quad\hspace{1pt}\forall c \in C \nonumber
\end{align}

The LR removes a problematic cardinality constraint: on VI profiles the remaining constraint matrix is TU (as shown in~\cite{Peters18_aaai}),
and thus the LR has an integral optimum and can be found in polynomial time.

\paragraph{Lagrange Multipliers.}
\Cref{cor:thiele-concave} says that the optimal score $\thiele(t)$ achieved by a committee of size $t$ is a concave function on VI profiles.
Therefore, every $\lambda \in [\thiele(k+1)-\thiele(k), \thiele(k)-\thiele(k-1)] \subseteq [0,n]$ is a slope of an affine function
$i \mapsto \thiele(k)+\lambda \cdot (i-k)$
that passes through the point $(k,F(k))$ and upper-bounds $\thiele(i)$ for all $i \in \{0,1,\dots,m\}$, i.e.,
\[
  \thiele(k) + \lambda \cdot (i-k) \ge \thiele(i),
\]
This is equivalent to
\begin{align}
  \thiele(k) - \lambda \cdot k \ge \thiele(i) - \lambda \cdot i \nonumber
\end{align}
which implies that the maximum value of $\LR{\lambda}$ objective \eqref{new-ilp:objective},
for every $\lambda \in [\thiele(k+1)-\thiele(k), \thiele(k)-\thiele(k-1)] \subseteq [0,n]$,
is achieved for a solution of size $k$.
We will search for a fixed Lagrangian multiplier from this interval, i.e., $\lambda_k \defeq \thiele(k) - \thiele(k-1)$.

\paragraph{Binary Search for $\bm{\lambda_k}$.}
In order to find $\lambda_k = \thiele(k) - \thiele(k-1) \in [0,n]$ we use binary search over $\lambda \in [0,n]$.
This leads us either to an optimal solution of size $k$ or to two solutions $S,S'$ such that $|S| < k < |S'|$.
This type of Lagrange-multiplier binary search and producing two bracketing solutions is standard, for example,
Lagrangian relaxation of the cardinality constraint in the \textsc{$k$-Median} problem is equivalent to the \textsc{Uncapacitated Facility Location} problem with uniform opening cost.
In this problem one searches for multipliers that bracket $k$ open facilities before combining the resulting solutions (see, e.g.,~\cite{JainV01_lagrangian,CohenAddadLSS25_kmedian}).
For completeness, we provide the full binary-search argument below.

For a tested value $\lambda$,
we solve $\LR{\lambda}$ (in polynomial time) and obtain an optimal committee $W_\lambda$.
If $|W_\lambda|=k$, we can stop and output $W_\lambda$.
Otherwise, we do \emph{not} interpret $|W_\lambda| \neq k$ as meaning that $\lambda$ is already incorrect---it may happen that $\lambda = \lambda_k$ and $\LR{\lambda_k}$ has optimal committees of different sizes,
and the solver returns one of them.
Instead, we maintain a bracketing interval $[\lambda^-, \lambda^+]$ with the property that
$\max \{T(\lambda^-)\} \ge k$ and
$\min \{ T(\lambda^+) \} \le k$,
where $T(\lambda)$ denotes the set of sizes of optimal committees for $\LR{\lambda}$.
Initially, this holds for $\lambda^- = 0$ and $\lambda^+ = n$.
At each step we test $\lambda = \frac{\lambda^{-} + \lambda^{+}}{2}$;
if $|W_\lambda| > k$ we set $\lambda^{-} \leftarrow \lambda$, and if $|W_\lambda| < k$ we set $\lambda^{+} \leftarrow \lambda$.

This update rule remains valid since
as $\lambda$ increases, larger committees are penalized more by the term $-\lambda \cdot \sum_{c \in C} y_c$,
and thus both
$\min\{T(\lambda)\}$ and $\max\{T(\lambda)\}$ are monotone non-increasing functions of~$\lambda$.
Consequently, observing an optimal solution of size strictly greater than $k$ certifies that the considered multiplier $\lambda$ is (weakly) too small,
while observing an optimal solution of size strictly smaller than $k$ certifies that the considered multiplier $\lambda$ is (weakly) too large,
which justifies maintaining a bracketing interval in the binary search.

We continue the binary search until the length of the search interval is below additive precision $\varepsilon$ (as defined below), i.e., $\lambda^{+}-\lambda^{-} \le \varepsilon$.
(Recall that if at any point we obtain an optimal committee $W_\lambda$ with $|W_\lambda|=k$, we stop and output it.)

To bound the number of iterations,
recall that all weights $(w_i^v)_{i \in [m], v \in V}$ are given as rational numbers, with numerators and denominators encoded in binary.
Let $D$ be a common denominator of all input weights
(e.g., the least common multiple of their denominators).
Then every committee total score is a multiple of $1/D$.
Every breakpoint slope $\lambda$ at which the set of optimal sizes $T(\lambda)$ changes is of the form
$\thiele(t)-\thiele(t-1)$, for $t \in [m]$,
therefore it is also a multiple of $1/D$.
Hence any two distinct breakpoints differ by at least $1/D$ and
it suffices to run the binary search to additive precision
$\frac{1}{2D}$.
In particular, letting $B$ denote the total bit-length of the denominators in the input,
we have $D \le 2^{B}$ and may take
$\varepsilon \defeq 1/2^{B+1}$,
which yields $O(\log(n/\varepsilon))=O(\log n + B)$ iterations.
Since the input size is $\Omega(nm + B)$,
the number of binary-search iterations is at most linear in the input size.

Now, we describe the next steps of the algorithm
(assuming that during the binary-search procedure a committee of size $k$ has not been found).
Let $\bar{\lambda} \defeq (\lambda^- + \lambda^+)/2$ and solve $\LR{\bar{\lambda}}$,
obtaining a committee $W_{\bar{\lambda}}$.
If $|W_{\bar{\lambda}}|=k$, we stop and output $W_{\bar{\lambda}}$.

Otherwise, since $\lambda^+-\lambda^- \le \varepsilon$ and
$\varepsilon$ is chosen smaller than the minimum distance between two distinct breakpoints,
the interval $[\lambda^-,\lambda^+]$ contains at most one breakpoint.
Moreover, by construction of the bracketing interval,
the unique breakpoint in $[\lambda^-,\lambda^+]$ must be $\lambda_k$,
and there is no open interval of multipliers for which an LR-optimal committee has size exactly $k$.
Equivalently, $k$ lies on a linear segment of $\thiele(\cdot)$ (i.e., $|T(\lambda_k)|>1$).

In this case we perturb the multiplier and solve $\LR{\bar{\lambda}-\varepsilon}$ and $\LR{\bar{\lambda}+\varepsilon}$,
obtaining two LR-optimal committees $S,S'$ such that $|S|<k<|S'|$.
Intuitively, this perturbation selects the two endpoints of the size-interval $T(\lambda_k)$:
decreasing $\lambda$ slightly weakens the size penalty and yields a larger optimal committee, 
while increasing $\lambda$ slightly strengthens the penalty and yields a smaller optimal committee.
By concavity, this implies that $\thiele(t)$ is linear on the whole interval of sizes between $|S|$ and $|S'|$, and in particular
\begin{align}
  \thiele(a)-\lambda_k \cdot a = \thiele(k)-\lambda_k \cdot k,\qquad \text{for every } a \in \{|S|, |S|+1, \dots, |S'|\}.
  \label{eq:lambda-k-many-sizes}
\end{align}

Now, using \Cref{prop:combining-solutions-any-size} to committees $S$ and $S'$,
we can construct a committee $W$ of size exactly $k$
(after at most $\log(|S'|-|S|) \le \log(m)$ calls to \Cref{thm:combining-solutions})
which for $\theta \defeq \frac{k-|S|}{|S'|-|S|}$ satisfies the following:
\begin{align*}
  F(W) &\ge (1-\theta) \cdot F(S) + \theta \cdot F(S')\\
  &= (1-\theta) \cdot (\thiele(|S|)-\lambda_k \cdot |S| ) + \theta \cdot (\thiele(|S'|)-\lambda_k \cdot |S'| )\\
  &\stackrel{\eqref{eq:lambda-k-many-sizes}}{=}
  (1-\theta) \cdot (\thiele(k)-\lambda_k \cdot k ) + \theta \cdot (\thiele(k)-\lambda_k \cdot k ) = \thiele(k)-\lambda_k \cdot k.
\end{align*}
This inequality means that $W$ achieves a maximal objective value among all size-$k$ committees, so it is an optimal solution of the \genthiele instance.
This finishes the proof.
\end{proof}

\section{AI Assistance}\label{sec:ai-assistance}

In this section, we describe how Gemini (Deep Think), an AI-assistance tool, helped in achieving the results presented in this paper.
Analogous descriptions appeared recently~\cite{woodruff2026acceleratingscientificresearchgemini,Knuth2026ClaudesCycles},
where results in theoretical computer science were achieved with AI-assistance. We had two separate interactions with Gemini~\cite{Gemini-tech-report2025}.

\paragraph{Initial Interactions: Arriving at Lagrangian Relaxation.}
As a first step, we use Gemini Pro to generate the main ideas for the algorithm.
In particular, we simply prompt the model to solve the problem, using the full problem description.
Originally, the model proposed an incorrect polynomial-time algorithm based on Maximum Flow.
Fortunately, once we point out to the model that this algorithm is incorrect, 
the model immediately suggests the Lagrangian relaxation approach,
which is essentially the same as the one we eventually used.
However, once we ask the model to provide the proof of correctness, it provides an incorrect proof.
In particular, the concavity argument is incorrect.
After several attempts to prompt the model to correct this proof, the model still fails.

The sequence of interactions we have with the model---with full prompts but with responses abbreviated---is given in Appendix~\ref{app:ai-lagrangian}.

\paragraph{Final Step: Proof of Concavity of the Objective.}
As concavity of the objective function was a crucial missing part of the Lagrangian relaxation approach, in a second step, we reformulate the concavity property in a self-contained manner.
Using this reformulation, we prompt Gemini Deep Think.
In the very first response, the model gives us the correct proof of the statement, which is the basis for our main concavity theorem (\Cref{thm:combining-solutions}).
We remark our eventual theorem is slightly more generalized than the version we asked the model.
Specifically, we only ask the model to combine two solutions whose size differ by exactly two and did not ask for a polynomial time algorithm.
Moreover, the concavity formulation assumes that the two solutions are disjoint,
since their common elements might be included in the combined solution.
Formally, the solution created by the model might use the same interval twice,
which is not allowed in the final solution.
The model did not specifically discuss this issue.
Nevertheless, our final proof is essentially the same as produced by the model.

The exact user prompt and response for this second step are given in Appendix~\ref{app:ai-concavity}.

\section{Conclusion}

In this work we gave a polynomial-time algorithm for computing an optimal size-$k$ committee for Thiele voting rules on the Voter Interval domain,
thereby resolving a 10-year-old open problem \cite{ElkindL15_ijcai15,ElkindLP2017trends_book,Peters18_aaai,YangW18-aamas18,PetersL20_spoc_jair,GodziszewskiB0F21_vci_aaai21,lackner23abc_book,YangW23-jaamas,ElkindLP25_arxiv_preference_restrictions,LassotaSornat_aaai26}.
We also extended the result to the Generalized Thiele rule.
A key ingredient is our concavity theorem for families of intervals, which we view as a structural result of independent interest.

An interesting direction for future work is to obtain a purely combinatorial algorithm for Thiele rules on the Voter Interval domain (and, analogously, on the Candidate Interval domain~\cite{Peters18_aaai,PetersL20_spoc_jair}), avoiding linear-programming solvers.
Achieving this may require further structural insights beyond concavity,
which could deepen our understanding of why these domains admit efficient exact algorithms and may be of independent interest.

Finally, our results illustrate how human--AI collaboration can help identify useful structural conjectures and guide proof development in theoretical computer science.
We hope this experience will encourage further systematic study of such workflows on other open problems in theoretical computer science.

\subsection*{Acknowledgements}
Krzysztof Sornat was supported by the European Research Council (ERC) under the European Union’s Horizon 2020 research and innovation programme (grant agreement No 101002854).

\bibliographystyle{alpha}
\bibliography{bib}

\appendix

\iftrue
\section{Human--AI Interaction Log I: Proof of Concavity}\label{app:ai-concavity}

\setcounter{secnumdepth}{0}

Below, we include the exact user prompt and response from the model that provides the proof of concavity (\Cref{thm:combining-solutions}).
We stress that this is a single call to Gemini Deep Think without further interaction.
The response is a correct proof of the required statement.

\begin{prompt}
Consider the following objective. We have non-decreasing functions $f_1, \dots, f_n$ from non-negative integers to non-negative reals such that they are concave. For given a multiset of intervals $S = \{I_1, \dots, I_k\}$, its *total value* is computed as follows: For every $i \in [n]$, let $c_i$ denote the number of intervals containing $i$. Then, let the total value be $F(S) = f_1(c_1) + \cdots + f_k(c_k)$. Prove the following statement: Given two multiset of intervals $S, S'$ of size $k - 1, k + 1$ respectively (where $k$ is a positive integer). Then, we can always find a sub-multiset $S'' \subseteq S \cup S'$ of size $k$ such that $F(S'') \geq (F(S) + F(S')) / 2$.
\end{prompt}
\begin{answer}
To prove this statement, we can use a constructive combinatorial approach. The problem asks us to find a sub-multiset $S'' \subseteq S \cup S'$ of size $k$ whose total value is at least the average of the values of $S$ and $S'$.

Let $U = S \cup S'$ be the multiset union of the two given collections of intervals. The size of $U$ is $(k - 1) + (k + 1) = 2k$.

\paragraph{Step 1: The Canonical Uncrossed Multiset $U^*$.}

First, we extract all $2k$ left endpoints and $2k$ right endpoints from the intervals in $U$. We sort both multisets independently in non-decreasing order:

Left endpoints $L$: $l_1 \le l_2 \le \dots \le l_{2k}$
Right endpoints $R$: $r_1 \le r_2 \le \dots \le r_{2k}$
We define a canonical multiset of intervals $U^* = \{ J_m = [l_m, r_m] \}_{m=1}^{2k}$ by pairing the $m$-th smallest left endpoint with the $m$-th smallest right endpoint.

Validity: For any $m$, there are exactly $m-1$ left endpoints strictly less than $l_m$. Thus, at most $m-1$ intervals in $U$ can finish strictly before $l_m$. This implies at least $2k - (m-1)$ right endpoints in $U$ must be $\ge l_m$. Consequently, the $m$-th smallest right endpoint $r_m$ must be $\ge l_m$. Therefore, all intervals in $U^*$ are strictly valid.
Identical Total Coverage: Because $U$ and $U^*$ share the exact same sets of endpoints $L$ and $R$, their total pointwise coverage is identical. For any point $x \in [n]$, $c_{U^*}(x) = c_U(x) = c_S(x) + c_{S'}(x)$.

\paragraph{Step 2: Alternating Partition of $U^*$.}

Consider the ordered sequence of intervals $J_1, J_2, \dots, J_{2k}$ in $U^*$. For any arbitrary point $x$, an interval $J_m$ covers $x$ if and only if $l_m \le x \le r_m$. Because $l_m$ is non-decreasing, the condition $l_m \le x$ is satisfied by a prefix of indices $\{1, 2, \dots, p\}$. Because $r_m$ is non-decreasing, $r_m \ge x$ is satisfied by a suffix of indices $\{q, q+1, \dots, 2k\}$. Therefore, the subset of intervals in $U^*$ covering $x$ corresponds exactly to a contiguous subsegment of indices $\{q, q+1, \dots, p\}$.

We partition $U^*$ into two sub-multisets of size $k$ by alternating the sorted indices:
$A^* = \{J_1, J_3, J_5, \dots, J_{2k-1}\}$,
$B^* = \{J_2, J_4, J_6, \dots, J_{2k}\}$.
Because any contiguous block of integers is split as evenly as mathematically possible into evens and odds, the number of covering intervals at point $x$ distributed to $A^*$ and $B^*$ will differ by at most 1:
\[
|c_{A^*}(x) - c_{B^*}(x)| \le 1
\]

\paragraph{Step 3: Majorization and Lower Bound.}

For every point $x$, the integers $c_{A^*}(x)$ and $c_{B^*}(x)$ sum to $c_S(x) + c_{S'}(x)$ and differ by at most 1. This means they form the most balanced possible pair of integers for that sum.

By the Hardy-Littlewood-Pólya majorization inequality, a more balanced distribution maximizes the sum of any concave functions. Since each $f_x$ is concave:
\[
f_x(c_{A^*}(x)) + f_x(c_{B^*}(x)) \ge f_x(c_S(x)) + f_x(c_{S'}(x))
\]

Summing this across all points $x \in [n]$ yields:
\[
F(A^*) + F(B^*) \ge F(S) + F(S')
\]

Since $A^*$ and $B^*$ both have a size of exactly $k$, the Pigeonhole Principle dictates that at least one of them (assume $A^*$ without loss of generality) satisfies:
\[
F(A^*) \ge \frac{F(S) + F(S')}{2}
\]

\paragraph{Step 4: Transformation Back to $U$.}

We have found a size-$k$ subset $A^*$ from $M^*$ that achieves the required threshold, but we need a subset $S''$ drawn from our original multiset $U$.

Any original configuration $U$ can be reached from the canonical $U^*$ via a finite sequence of "crossing" operations. A crossing operation takes two crossing intervals $I_1 = [l_a, r_a]$ and $I_2 = [l_b, r_b]$ (where $l_a \le l_b \le r_a \le r_b$) and replaces them with their union $I_{\cup} = [l_a, r_b]$ and intersection $I_{\cap} = [l_b, r_a]$. (This is equivalent to reversing the standard "uncrossing" algorithm, meaning it takes adjacent inversions of the right-endpoint permutation and sorts them back into the configuration found in $U$).

Let $U^* = M_0 \to M_1 \to \dots \to M_N = U$ be this sequence. We can maintain a sub-multiset $Z_i \subseteq M_i$ of size $k$ at every step starting with $Z_0 = A^*$:

If $Z_i$ contains both $I_1, I_2$, we place both $I_{\cup}, I_{\cap}$ into $Z_{i+1}$. The sum of their point-wise coverages is perfectly identical since $1_{I_1} + 1_{I_2} = 1_{I_{\cup}} + 1_{I_{\cap}}$.
If $Z_i$ contains neither, we put neither into $Z_{i+1}$.
If $Z_i$ contains exactly one of $\{I_1, I_2\}$, we strictly place the union $I_{\cup}$ into $Z_{i+1}$. Because the union strictly covers the spans of both $I_1$ and $I_2$, the coverage at any point either strictly increases or remains the same.
By induction, we arrive at a size-$k$ sub-multiset $Z_N \subseteq U$. Because coverage never decreases at any step and the given functions $f_x$ are non-decreasing, evaluations only increase or stay constant:
\[
F(Z_N) \ge F(A^*) \ge \frac{F(S) + F(S')}{2}
\]

Setting $S'' = Z_N$ fulfills all constraints, concluding the proof.
\end{answer}

\setcounter{secnumdepth}{1}
\section{Human--AI Interaction Log II: Arriving at Lagrangian Relaxation}\label{app:ai-lagrangian}

\newcommand{\abbrivplain}{{\textcolor{darkgray}{[...] (ABBREVIATED)}}}
\newcommand{\abbriv}{{\vspace{10pt}\textcolor{darkgray}{[...] (ABBREVIATED)}}\vspace{5pt}}

\setcounter{secnumdepth}{0}

In this section, we include the sequence of interactions with Gemini Pro model that lead to the formulation of the Lagrangian relaxation approach together with the concavity theorem.
At the beginning, the model suggested a different (incorrect) algorithm using Maximum Flow, but after pointing out that its algorithm is incorrect, the model converges to the Lagrangian relaxation approach.
However, as noted earlier, the proof given by the model is incorrect. 
While we have attempted to make the model corrects its mistake, the proof remains wrong until the very end. 

Since the model's responses are quite long, we abbreviate part of the responses (that we believe are less relevant)---this is denoted by ``\abbrivplain'' in the responses below.

We also note that there is in fact an error in the problem formulation in the very first user prompt where $\sum_{j \in [m]} \sum_{i \in [w_{c(j)}]} w_i$ should instead be $\sum_{j \in [m]} \sum_{i \in [c(j)]} w_i$. This is a genuine mistake on our part. Despite this typo, the model immediately starts working on the correct version of the problem.

{

\setcounter{secnumdepth}{0}

\begin{prompt}
Consider the following problem: We have intervals $[l_1, r_1], \dots, [l_n, r_n]$ where $l_i, r_i$ are integers in $\{0, \dots, m\}$. Furthermore, we also have scores $w_1 \geq \dots \geq w_k \geq 0$. The goal is to pick $k$ intervals that maximizes the following objective $\sum_{j \in [m]} \sum_{i \in [w_{c(j)}]} w_i$
where $c(j)$ denotes the number of times $j$ is covered by the picked intervals. (Note that $c(j)$ is always a number from $0, \dots, k$.)

Can you prove that this problem is NP-hard?
\end{prompt}

\begin{answer}
Based on the problem statement provided, the problem is \textbf{not NP-hard}. In fact, it can be solved in \textbf{polynomial time}.

The problem describes a variation of the \textbf{Maximum Coverage Problem} specialized for \textbf{Intervals}, with an objective function characterized by \textbf{diminishing returns} (concavity). While the general Maximum Coverage problem (on arbitrary sets) is NP-hard, the interval structure and the concavity of the weights allow for efficient solutions using \textbf{Min-Cost Flow} or \textbf{Linear Programming}.

\abbriv
\end{answer}

\begin{prompt}
Ok, this is good. Can you please help write a full algorithm together with a full formal proof of correctness? You may assume that $m$ is not too large and just create a graph on the $m$ time steps directly without identifying the unique endpoints first.
\end{prompt}

\begin{answer}
Here is the full algorithm and a formal proof of correctness.

\paragraph{Problem Restatement.}

We are given $n$ intervals $[l_1, r_1], \dots, [l_n, r_n]$ within the range $\{0, \dots, m\}$. We have weights $w_1 \geq \dots \geq w_k \geq 0$. We must select a subset of indices $S \subseteq \{1, \dots, n\}$ with $|S| = k$ to maximize:

\[
\sum_{j \in [m]} \sum_{i=1}^{c(j)} w_i
\]

where $c(j)$ is the number of selected intervals covering point $j$.

\medskip
\hrule
\medskip

\paragraph{Part 1: The Algorithm (Min-Cost Flow Construction).}

We reduce this problem to a \textbf{Min-Cost Max-Flow} problem on a directed acyclic graph.

\paragraph{1. Graph Construction.}

Let $G = (V, E)$ be a directed graph.

\begin{itemize}
\item \textbf{Nodes:} Create $m+2$ nodes labeled $0, 1, \dots, m, m+1$.
\item Node $0$ acts as the \textbf{Source ($S$)}.
\item Node $m+1$ acts as the \textbf{Sink ($T$)}.
\item The nodes $1, \dots, m$ represent the time steps.
\item \textbf{Edges:} We construct two types of edges: \textbf{Interval Edges} and \textbf{Backbone (Coverage) Edges}.
\end{itemize}

\textbf{A. Interval Edges (Representing choice of intervals).}

For each interval $I_u = [l_u, r_u]$ in the input (where $u \in \{1, \dots, n\}$):
\begin{itemize}
\item Add a directed edge from $l_u$ to $r_u + 1$.
\item \textbf{Capacity:} $1$
\item \textbf{Cost:} $0$
\item \textit{Intuition:} Flowing through this edge represents ``picking'' interval $u$. It skips the ``costly'' backbone edges between $l_u$ and $r_u+1$.
\end{itemize}

\textbf{B. Backbone Edges (Representing coverage count).}

For each position $j \in \{0, \dots, m\}$, add $k$ parallel edges from node $j$ to node $j+1$.
Let's index these parallel edges by $t \in \{1, \dots, k\}$.
\begin{itemize}
\item \textbf{Capacity:} $1$ (per parallel edge)
\item \textbf{Cost:} $-w_t$
\item \textit{Intuition:} These edges represent the benefit of covering point $j$. If 1 unit of flow goes through the backbone at $j$, we gain $w_1$ (cost $-w_1$). If 2 units go through, we gain $w_1 + w_2$ (cost $-w_1 - w_2$).
\end{itemize}

\paragraph{2. The Flow Requirement.}

We require a total flow of exactly \textbf{$k$} units from Source ($0$) to Sink ($m+1$).

\paragraph{3. Execution.}

\begin{enumerate}
\item Construct the graph as described.
\item Run a standard \textbf{Min-Cost Max-Flow} algorithm (e.g., successive shortest path using Bellman-Ford or SPFA if weights are negative, or use potentials with Dijkstra) to send exactly $k$ units of flow from $0$ to $m+1$.
\item The minimum cost found will be negative. The maximum objective value for the original problem is \textbf{$-(MinCost)$}.
\end{enumerate}

\paragraph{4. Extracting the Solution.}

To identify which intervals to pick:

\begin{itemize}
\item Inspect the \textbf{Interval Edges}. If the edge corresponding to interval $[l_u, r_u]$ carries flow $1$, include interval $u$ in your solution set $S$.
\end{itemize}

\abbriv
\end{answer}

\begin{prompt}
Your algorithm is obviously incorrect since it allows more than $k$ intervals to be picked. Please fix this.

If you cannot fix it, try proving that it is NP-hard instead.
\end{prompt}

\begin{answer}
You are absolutely correct. My previous construction was flawed because it failed to strictly enforce the ``exactly $k$ intervals'' constraint (it allowed ``stitching'' disjoint intervals into a single unit of flow).

The problem is \textbf{not NP-hard}. It is solvable in \textbf{polynomial time} ($O(Poly(n, m) \cdot \log (\text{Precision}))$).

The correct approach requires handling the ``exactly $k$'' constraint using \textbf{Lagrangian Relaxation} (often called the WQS Binary Search or ``Alien's Trick'' in competitive programming) coupled with a \textbf{Min-Cost Circulation} algorithm.

Here is the corrected algorithm and the logic behind it.

\paragraph{The Core Logic: Lagrangian Relaxation.}

The constraint ``pick exactly $k$ intervals'' is hard to model directly in flow networks because ``count'' is a global property. However, we can move this constraint into the objective function.

Instead of fixing $k$, let's assign a \textbf{cost $\lambda$} to the act of picking \textit{any} interval.
We then solve the simpler unconstrained problem:

\begin{quote}
\textbf{Subproblem:} Pick \textit{any} set of intervals to maximize:
\[
\left( \sum_{j} \sum_{i=1}^{c(j)} w_i \right) - \lambda \cdot (\text{number of picked intervals})
\]
\end{quote}

Intuitively:

\begin{itemize}
\item If $\lambda$ is very high (expensive), the optimal solution picks few intervals.
\item If $\lambda$ is low (cheap), the optimal solution picks many intervals.
\item Since the marginal gain of adding intervals is diminishing (concavity), the number of intervals picked is monotonic with respect to $\lambda$. We can \textbf{binary search} for the $\lambda$ that results in picking exactly $k$ intervals.
\end{itemize}

The function $G(k) = \text{Optimal Score with } k \text{ intervals}$ is concave. This allows the binary search to find the optimal solution effectively.

\medskip
\hrule
\medskip

\subsection{The Algorithm}

\paragraph{1. Binary Search for $\lambda$.}

We binary search for a penalty value $\lambda \in [-M, M]$ (where $M$ is a sufficiently large number).
In each step of the binary search, we solve the \textbf{Min-Cost Circulation} problem described below.

\paragraph{2. Network Construction (Min-Cost Circulation).}

For a fixed $\lambda$, we construct a graph to find the optimal subset of intervals.

\begin{itemize}
\item \textbf{Nodes:} Create nodes $0, 1, \dots, m, m+1$ representing the timeline points.
\item \textbf{Timeline Edges (Left-to-Right):} These represent the ``state'' of being covered.
\item For each $j \in \{0, \dots, m\}$, add edges from $j \to j+1$.
\item Since the benefit is concave ($w_1 \ge w_2 \dots$), we model this with \textbf{Convex Cost} flow.
\item Add $n$ parallel edges from $j \to j+1$:
\begin{itemize}
\item Edge 1: Capacity 1, Cost $-w_1$.
\item Edge 2: Capacity 1, Cost $-w_2$.
\item \dots and so on.
\end{itemize}
\item \textit{Note:} Maximizing positive weight is equivalent to minimizing negative cost.
\item \textbf{Interval Edges (Right-to-Left):} These represent the decision to ``pick'' an interval.
\item For each available interval $i$ given as $[l_i, r_i]$, add a directed edge from \textbf{$(r_i + 1) \to l_i$}.
\item \textbf{Capacity:} 1 (each interval can be picked at most once).
\item \textbf{Cost:} $\lambda$.
\end{itemize}

\paragraph{3. Solving the Circulation.}

Run a \textbf{Min-Cost Circulation} algorithm. This algorithm finds a flow that satisfies flow conservation at every node and minimizes total cost.

\begin{itemize}
\item A \textbf{circulation} decomposes into a set of cycles.
\item A cycle here corresponds to:
\begin{enumerate}
\item Using an interval edge $(r_i+1) \to l_i$ (Paying $\lambda$).
\item Flowing back through the timeline $l_i \to \dots \to r_i+1$ (Collecting benefits $-w_1, -w_2 \dots$).
\end{enumerate}
\item The algorithm will only saturate a cycle if the benefit (sum of $w$'s) exceeds the cost $\lambda$.
\end{itemize}

\paragraph{4. Update Step.}

Let $k'$ be the number of interval edges saturated (carrying flow) in the optimal circulation.

\begin{itemize}
\item If $k' > k$: The penalty $\lambda$ is too low. Increase $\lambda$.
\item If $k' < k$: The penalty $\lambda$ is too high. Decrease $\lambda$.
\item If $k' = k$: We found the optimal set.
\end{itemize}

The final answer is:

\[
\text{Objective} = -(\text{MinCost from Circulation}) + k \cdot \lambda
\]

\medskip
\hrule
\medskip

\paragraph{Proof of Correctness (Sketch).}

\begin{enumerate}
\item \textbf{Total Unimodularity (TU):} The constraint matrix of the interval problem (an interval matrix) is Totally Unimodular. This ensures that the Linear Programming relaxation of the problem has an integral optimal solution.
\item \textbf{Concavity:} The function $f(x) = \text{max coverage with } x \text{ intervals}$ is concave (discrete concave) due to the diminishing returns of overlaps ($w_i \ge w_{i+1}$) and the TU property.
\item \textbf{WQS Binary Search Validity:} The concavity ensures that for any $k$, there exists a ``slope'' $\lambda$ such that the unconstrained optimization with penalty $\lambda$ yields an optimal solution with exactly $k$ items (or allows us to interpolate between $k$ and $k+1$).
\item \textbf{No Stitching:} The ``stitching'' issue of the previous algorithm is solved by the topology. Flow travels $L \to R$ on the timeline and $R \to L$ on interval edges. To reuse the timeline flow (stitching), one would need to jump from the end of one interval to the start of another \textit{against} the flow direction, which is impossible without using another costly interval edge or traversing the timeline (which correctly accounts for coverage).
\end{enumerate}

Thus, the problem is solvable in polynomial time using Min-Cost Circulation within a binary search.
\end{answer}

\begin{prompt}
Can you give some more justification regarding concavity? I actually don't see how it holds.
\end{prompt}

\begin{answer}
This is an excellent question. The claim of concavity is the cornerstone of the polynomial-time solution (specifically for the WQS binary search), so it requires a rigorous proof rather than just an appeal to ``diminishing returns.''

The concavity holds because the problem can be transformed into a \textbf{Min-Cost Flow problem with a parameter $k$ as the flow demand}. A fundamental property of Min-Cost Flow networks is that the minimum cost is a \textbf{convex function} of the flow amount.

\abbriv
\end{answer}

\begin{prompt}
Your proof is wrong: We already discussed this but the minimum cost flow formulation you used in the proof is incorrect because it does *not* take the cardinality constraint into account.

Can you just try to prove this from first principle? Namely, just try to argue, if we have two solutions for $k - 1$ and $k + 1$, how do we combine them to get the solution for $k$? Again, be completely formal and self-contained.
\end{prompt}

\begin{answer}
You are asking for a direct, first-principles justification for why the algorithm works. The core of the correctness relies on the \textbf{concavity} of the optimal objective function $F(k)$ (the max score with $k$ intervals).

If $F(k)$ is concave (i.e., $2F(k) \ge F(k-1) + F(k+1)$), then the local marginal gains are diminishing, and the \textbf{WQS Binary Search} (Lagrangian Relaxation) is guaranteed to find the global optimum.

Below is the formal proof of concavity. It avoids flow networks and relies on the geometric properties of intervals---specifically, the \textbf{Total Unimodularity (TU)} of the interval matrix.

\subsection{Problem Setup}

Let $\mathcal{I}$ be a collection of $n$ intervals. Let $A$ be the $m \times n$ coverage matrix, where $A_{ji} = 1$ if interval $i$ covers point $j$, and $0$ otherwise.
Let $w: \mathbb{Z}_{\ge 0} \to \mathbb{R}$ be a function defining the score of covering a point $c$ times:
\[ w(c) = \sum_{t=1}^{\min(c, k)} w_t \]
Since $w_1 \ge w_2 \ge \dots$, the function $w(c)$ is \textbf{concave}.
The total objective for a selection vector $x \in \{0, 1\}^n$ is:
\[ \text{Obj}(x) = \sum_{j=1}^m w( (Ax)_j ) \]
Let $F(k)$ be the maximum objective value subject to $\sum x_i = k$.

\paragraph{The Claim.}

We prove that $F(k)$ is concave. Specifically, for any $k$:
\[ F(k-1) + F(k+1) \le 2F(k) \]

\subsection{The Proof}

\paragraph{Step 1: Combining Solutions.}
Let $x$ be an optimal solution for $k-1$ (so $\sum x_i = k-1$).
Let $z$ be an optimal solution for $k+1$ (so $\sum z_i = k+1$).

Consider the sum vector $h = x + z$. Note that $h \in \{0, 1, 2\}^n$.
The total number of intervals in $h$ is $(k-1) + (k+1) = 2k$.
The coverage depth at point $j$ for this combined multiset is $C_j = (Ax)_j + (Az)_j$.

We wish to decompose $h$ into two new integer vectors $y_1, y_2 \in \{0, 1\}^n$ such that:
\begin{enumerate}
    \item $y_1 + y_2 = h$ (They partition the multiset $h$).
    \item $\sum y_{1,i} = k$ and $\sum y_{2,i} = k$ (Both are feasible for problem size $k$).
    \item Their coverage profiles are "balanced" to exploit the concavity of $w$.
\end{enumerate}

\paragraph{Step 2: The Splitting Lemma (via Total Unimodularity).}

\begin{lemma}
Let $M$ be a matrix that is Totally Unimodular (TU). For any integer vector $h$, there exist integer vectors $y_1, y_2$ such that $y_1 + y_2 = h$ and for every row $r$:
\[ |(My_1)_r - (My_2)_r| \le 1 \]
\end{lemma}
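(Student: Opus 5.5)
The plan is to use the standard integrality argument for totally unimodular systems, via Hoffman--Kruskal. We encode ``balanced halves'' as a polytope with integral right-hand sides, exhibit a fractional point in it, and extract an integral vertex.

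Write $b \defeq Mh$, which is an integer vector since $M$ and $h$ are integral. Consider the polytope
\[
  Q \defeq \Bigl\{ y \in \mathbb{R}^{q} : \bigl\lfloor \tfrac{b}{2} \bigr\rfloor \le My \le \bigl\lceil \tfrac{b}{2} \bigr\rceil,\ \ \bigl\lfloor \tfrac{h}{2} \bigr\rfloor \le y \le \bigl\lceil \tfrac{h}{2} \bigr\rceil \Bigr\},
\]
with floors and ceilings taken componentwise.

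The steps are as follows.

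\textbf{Nonemptiness.} The point $y = h/2$ lies in $Q$, because $M(h/2) = b/2$ and every real number lies between its floor and its ceiling.

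\textbf{Integrality of $Q$.} The constraint matrix of $Q$ is $M$ stacked on the identity $I$, each taken with both signs. Appending an identity to a TU matrix, and duplicating rows with sign $-1$, preserves total unimodularity. All right-hand sides are integers, and $Q$ is bounded because of the box constraints. Hence by the Hoffman--Kruskal theorem, already invoked in the Preliminaries as ``every integer linear program with a TU constraint matrix has integral basic solutions'', $Q$ is an integral polytope. Since it is nonempty, it has an integral vertex $y_1$.

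\textbf{Defining $y_2$.} Set $y_2 \defeq h - y_1$, which is integral, so $y_1 + y_2 = h$. For each row $r$, we have $(My_1)_r \in \{\lfloor b_r/2\rfloor, \lceil b_r/2\rceil\}$ and $(My_2)_r = b_r - (My_1)_r$. So the two values are the two halves of $b_r$, and they differ by at most $1$.

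\textbf{Bonus from the box constraints.} The box constraints also give each coordinate of $y_1$ and $y_2$ in $\{\lfloor h_i/2\rfloor, \lceil h_i/2\rceil\}$. So if $h \in \{0,1,2\}^n$, both $y_1, y_2 \in \{0,1\}^n$, which is what the surrounding argument needs.

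The only step with real content is the integrality of $Q$, namely checking that $\begin{pmatrix} M \\ I \end{pmatrix}$ remains TU. This holds because any square submatrix using identity rows can be expanded along those rows, reducing to a square submatrix of $M$.

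I would stress one caveat when placing this lemma back into the AI's argument. To also force $\sum y_{1,i} = k$, one must include the all-ones row in $M$. For VI profiles, that augmented matrix is \emph{not} TU, as noted by Peters and in the Introduction. So the lemma itself is correct, but applying it to the cardinality-augmented coverage matrix is exactly where that earlier proof attempt breaks down. This is why the paper instead uses the canonical endpoint-nested family and its alternating partition in \Cref{thm:combining-solutions}.
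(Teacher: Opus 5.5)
Your proof is correct and complete. The point $h/2$ shows the halving polytope is nonempty, and the stacked constraint matrix with blocks $M,-M,I,-I$ is TU. So the polytope has an integral vertex $y_1$, and $y_2=h-y_1$ splits each $(Mh)_r$ into its floor and ceiling halves.

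There is, however, no proof in the paper to measure yours against. The lemma appears only inside Gemini's response in the interaction log, asserted without justification. The authors' objection there is to its \emph{application} to the coverage matrix augmented by an all-ones row, which is not TU by their $4\times 4$ counterexample. That is exactly your caveat. (For $h\in\{0,1\}^q$ the lemma is Ghouila-Houri's equitable-bicoloring characterization of TU matrices; your polytope argument is the standard proof for general integer $h$.)

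The meaningful comparison is with how \Cref{thm:combining-solutions} obtains the split the AI wanted, balanced both in coverage at every point \emph{and} in cardinality. The paper replaces $U=S\cup S'$ by its canonical endpoint-nested family $U^*$. There the intervals covering any point form a contiguous block of indices. Hence the coverage rows of $U^*$ together with the all-ones row have consecutive ones in every row, so for $U^*$ the augmented matrix \emph{is} TU. The parity partition $(A^*,B^*)$ is then an explicit integral point of your halving polytope for that matrix with $h=\mathbf{1}$.

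What the TU viewpoint does not supply is the paper's Step~4. The balanced split lives in $U^*$ rather than in $S\cup S'$. The right-endpoint-swap transformation brings it back to a subfamily of the original intervals: it preserves size and never decreases coverage, so it uses the monotonicity of the $f_i$.

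Your route buys generality: any TU matrix and any integer $h$. Through the box constraints it also automatically sends coordinates with $h_i=2$ to both halves, mirroring the paper's separate handling of $D=S\cap S'$. The paper's route buys an explicit, LP-free construction that sidesteps the non-TU cardinality row of the original family.
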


\textbf{Application to our problem:}
The constraint matrix describing our problem consists of:
1. The interval coverage matrix $A$ (which has consecutive ones in columns).
2. The cardinality constraint (a row of all 1s).
It is a known result that an interval matrix augmented with a row of 1s is \textbf{Totally Unimodular}.

Applying the lemma to $h = x + z$:
We can find $y_1, y_2$ such that $y_1 + y_2 = h$, and:
\begin{itemize}
    \item \textbf{Cardinality Balance:} $|\sum y_{1,i} - \sum y_{2,i}| \le 1$. Since the total sum is $2k$, the sums must be exactly $k$ and $k$. Thus, $y_1, y_2$ are valid feasible solutions for the size $k$ problem.
    \item \textbf{Coverage Balance:} For every point $j$, the coverages $c_1 = (Ay_1)_j$ and $c_2 = (Ay_2)_j$ satisfy $|c_1 - c_2| \le 1$.
\end{itemize}

Since $c_1 + c_2 = C_j$ (total coverage at $j$), the condition $|c_1 - c_2| \le 1$ implies that $c_1$ and $c_2$ are the floor and ceiling of the average:
\[ \{c_1, c_2\} = \left\{ \left\lfloor \frac{C_j}{2} \right\rfloor, \left\lceil \frac{C_j}{2} \right\rceil \right\} \]

\paragraph{Step 3: Discrete Concavity Argument.}
We now compare the total scores.
We know that for any concave function $w$ and any integers $a, b$:
\[ w(a) + w(b) \le w\left(\left\lfloor \frac{a+b}{2} \right\rfloor\right) + w\left(\left\lceil \frac{a+b}{2} \right\rceil\right) \]
(Intuitively: pushing values closer to the mean increases the sum for concave functions).

Applying this to every point $j$:
\[ w((Ax)_j) + w((Az)_j) \le w((Ay_1)_j) + w((Ay_2)_j) \]

Summing over all points $j$:
\[ \text{Obj}(x) + \text{Obj}(z) \le \text{Obj}(y_1) + \text{Obj}(y_2) \]

Since $x$ and $z$ were optimal for $k-1$ and $k+1$:
\[ F(k-1) + F(k+1) \le \text{Obj}(y_1) + \text{Obj}(y_2) \]

Since $y_1$ and $y_2$ are valid solutions for $k$, their scores cannot exceed optimal $F(k)$:
\[ \text{Obj}(y_1) \le F(k) \quad \text{and} \quad \text{Obj}(y_2) \le F(k) \]

Combining these inequalities:
\[ F(k-1) + F(k+1) \le 2 F(k) \]

\paragraph{Conclusion.}
This proves that the objective function $F(k)$ is concave. Consequently, the WQS Binary Search algorithm (Lagrangian Relaxation) is mathematically justified and will correctly find the global maximum in polynomial time.
\end{answer}

\begin{prompt}
You claim that an interval matrix augmented with a row of 1s is Totally Unimodular. This is clearly wrong as can be seen by the following matrix [[1, 0, 0, 1], [0, 1, 0, 1], [0, 0, 1, 1], [1, 1, 1, 1]] which has determinant 2. Can you fix this issue?
\end{prompt}

\begin{answer}
You are absolutely correct. The matrix you provided is a valid counterexample. The ``interval matrix augmented with an all-ones row'' is \textbf{not} totally unimodular (TU) in general. I apologize for that error.

While the specific matrix formulation involving the cardinality row is not TU, the \textbf{Maximum Weight $k$-Interval Coverage problem is still solvable in polynomial time}.

The correctness of the algorithm relies on the \textbf{concavity of the objective function} $F(k)$. We can prove this concavity via a reduction to \textbf{Min-Cost Flow}, relying on the fact that the \textbf{node-arc incidence matrix} of a network is always TU (unlike the interval coverage matrix with a cardinality constraint).

Here is the corrected formal proof that resolves the issue by shifting the perspective from generic linear programming to network flow theory.

\paragraph{Addressing the Matrix Issue.}
The user correctly identified that the constraint matrix formed by interval coverage constraints plus a global cardinality constraint is not necessarily Totally Unimodular (TU). Therefore, we cannot rely on the standard LP relaxation of that specific matrix formulation.

However, we can model the problem exactly using a \textbf{Network Flow} formulation. The constraint matrix of a network flow problem (the node-arc incidence matrix) is always TU. The cardinality constraint $k$ is handled as the \textbf{flow demand}, which acts as the right-hand side of the linear system rather than a row in the matrix itself. This preserves the integrality of the solution.

\subsection{Formal Proof}

\begin{theorem}
Let $F(k)$ be the maximum score achievable using exactly $k$ intervals. The function $F(k)$ is concave.
\end{theorem}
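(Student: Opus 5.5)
The claim is that $F(k)$, the maximum score achievable with exactly $k$ intervals, is concave, i.e.\ $F(k-1)+F(k+1)\le 2F(k)$ for every $k\ge 1$ with $k+1$ not exceeding the number of available intervals. I would not go through a flow reformulation, since the cardinality constraint is exactly what breaks the network structure. Instead I would derive the claim directly from the combining result already established, \Cref{thm:combining-solutions}. The plan is to take optimal families for the neighbouring sizes and produce a size-$k$ family inside their union whose value is at least their average.

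Concretely, fix $k$ and let $S$ be an optimal family of $k-1$ intervals and $S'$ an optimal family of $k+1$ intervals. Both are sub-families of the input collection $\mathcal{I}$, so $F(S)=F(k-1)$ and $F(S')=F(k+1)$. The objective has the form required by \Cref{thm:combining-solutions}: set $f_j(c)=\sum_{t=1}^{\min(c,k)} w_t$. Each $f_j$ is non-negative and non-decreasing because $w_t\ge 0$. Each $f_j$ is concave because its increments $w_1\ge w_2\ge\dots\ge w_k\ge 0=0=\dots$ are non-increasing. Since $|S|<|S'|$, the theorem gives a family $P=R\cup R'\cup(S\cap S')$ of size $|P|=(|S|+|S'|)/2=k$ (this is an integer, so floor and ceiling coincide). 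For this family $\theta=\frac{k-(k-1)}{2}=\frac12$, hence $F(P)\ge \frac12\bigl(F(k-1)+F(k+1)\bigr)$.

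The remaining step, and the only delicate one, is to check that $P$ is feasible for the size-$k$ problem: it must be a genuine sub-collection of $\mathcal{I}$ with no interval used more often than it is available. Here the specific form $P=R\cup R'\cup(S\cap S')$ with $R\subseteq S\setminus S'$ and $R'\subseteq S'\setminus S$ matters, with all operations taken as multiset operations. Every copy of an interval in $P$ is then a copy already present in $S$ or in $S'$, and the parts are disjoint. So the multiplicity of any interval in $P$ is at most $\max$ of its multiplicities in $S$ and $S'$, which is at most its multiplicity in $\mathcal{I}$. This is exactly the point the naive ``multiset union'' version overlooks, and the disjoint-case reduction in \Cref{thm:combining-solutions} handles it.

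Hence $P$ is a feasible committee of size $k$, so $F(k)\ge F(P)\ge \frac12(F(k-1)+F(k+1))$, which is the asserted concavity (cf.\ \Cref{cor:thiele-concave}). Boundary sizes where $F(k+1)$ is undefined need no argument.
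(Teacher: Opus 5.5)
Your proof is correct, but it is not the argument printed under this statement. In the appendix (the Gemini log), concavity is derived from a min-cost-flow reduction, which the authors explicitly flag as wrong. Had it worked, it would have given concavity for free from the classical fact that min-cost is convex in the demand. It fails for two reasons. First, a single unit of flow can traverse several interval arcs $l_i \to r_i+1$ one after another. By the cut identity $y_j + c(j) = k$, the flows of value $k$ therefore correspond to all selections with pointwise coverage at most $k$, of any cardinality, so the min cost is minus the best score over that larger set, not $-F(k)$. Second, the arc costs $\Phi_j(y) = -\sum_{t=1}^{k-y} w_t$ themselves depend on $k$, so the parametric convexity fact would not apply anyway.

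Your route applies \Cref{thm:combining-solutions} to optimal families of sizes $k-1$ and $k+1$, so that $|P| = k$ and $\theta = 1/2$. This is exactly how the paper itself obtains \Cref{cor:thiele-concave}. It exploits interval structure (the canonical endpoint-nested form and the alternating partition) rather than a network, and it is constructive, which the algorithm later needs. Your multiplicity check is also the right one: $P = R \cup R' \cup (S\cap S')$ uses each interval at most the larger of its multiplicities in $S$ and $S'$. That is precisely the reuse issue the authors note was left unaddressed in Gemini's uncrossing proof.

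One thing to tidy is the truncation in $f_j(c) = \sum_{t=1}^{\min(c,k)} w_t$. The truncation level must be the fixed length $K$ of the given weight sequence (equivalently, $w_t = 0$ for $t > K$), not the point at which you test concavity. Otherwise the number you call $F(k+1)$ changes with the test point. Since \Cref{thm:combining-solutions} only needs one fixed family of non-decreasing concave $f_j$, your argument then goes through verbatim for every size.
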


\begin{proof}
We construct a Min-Cost Flow network $G=(V, E)$ such that the minimum cost of sending $k$ units of flow is exactly $-F(k) + C_{constant}$. Since the minimum cost of a flow network with convex edge costs is a convex function of the flow amount, this implies $F(k)$ is concave.

\paragraph{Network Construction.}
Create a graph with nodes $0, 1, \dots, m, m+1$.
\begin{itemize}
    \item \textbf{Source/Sink:} We inject $k$ units of flow at node $0$ and extract $k$ units at node $m+1$.
    \item \textbf{Interval Arcs (Selection):} For every interval $i$ given as $[l_i, r_i]$, create a directed edge $u_i$ from $l_i$ to $r_i+1$.
    \begin{itemize}
        \item Capacity: $1$.
        \item Cost: $0$.
        \item Interpretation: Flow $x_i=1$ on this arc means interval $i$ is selected.
    \end{itemize}
    \item \textbf{Timeline Arcs (Non-coverage):} For every $j \in \{0, \dots, m\}$, create a directed edge $v_j$ from $j$ to $j+1$.
    \begin{itemize}
        \item Capacity: $k$.
        \item Cost Function $\Phi_j(y)$: Depends on flow $y$ through this arc.
    \end{itemize}
\end{itemize}

\paragraph{Flow Conservation and Costs.}
Let $y_j$ be the flow on the timeline arc $(j, j+1)$.
Let $S_j$ be the set of intervals covering point $j$. In the network, these correspond to interval arcs that "jump over" the timeline segment $(j, j+1)$.
Consider the cut separating $\{0, \dots, j\}$ from $\{j+1, \dots, m+1\}$. The total flow across this cut must be $k$. The edges crossing this cut are exactly the timeline arc $v_j$ and the interval arcs covering $j$.
\[ y_j + \sum_{i \in S_j} x_i = k \]
Thus, the number of active intervals covering $j$ is $c(j) = \sum x_i = k - y_j$.
The flow $y_j$ on the timeline represents the "missed" coverage opportunities (or the number of empty slots).

We want to maximize $\sum_{t=1}^{c(j)} w_t$.
In terms of $y_j$, we maximize $\sum_{t=1}^{k-y_j} w_t$.
This is equivalent to minimizing the negative score. We define the cost function $\Phi_j(y_j)$ for the timeline arc as:
\[ \Phi_j(y_j) = - \sum_{t=1}^{k-y_j} w_t \]

\paragraph{Convexity of Edge Costs.}
To use the properties of Min-Cost Flow, the edge cost functions $\Phi_j(y)$ must be convex.
Let us examine the marginal cost $\Delta \Phi_j(y) = \Phi_j(y+1) - \Phi_j(y)$.
\[ \Phi_j(y) = - (w_1 + \dots + w_{k-y}) \]
\[ \Phi_j(y+1) = - (w_1 + \dots + w_{k-y-1}) \]
\[ \Delta \Phi_j(y) = \Phi_j(y+1) - \Phi_j(y) = w_{k-y} \]
The sequence of marginal costs for $y=0, 1, \dots$ is $w_k, w_{k-1}, \dots, w_1$.
Since the problem specifies non-increasing weights $w_1 \ge w_2 \ge \dots \ge w_k$, the sequence of marginal costs is \textbf{non-decreasing}.
A discrete function with non-decreasing marginal differences is \textbf{convex}.

\subsection{Conclusion.}
\begin{enumerate}
  \item The constraint matrix of this problem is the node-arc incidence matrix of the graph $G$, which is \textbf{Totally Unimodular}. Thus, for integer demand $k$, there exists an integral optimal flow.
  \item A standard result in network flow theory states that the optimal objective value of a Min-Cost Flow problem with convex edge costs is a \textbf{convex function} of the flow amount $k$.
  \item Let $C(k)$ be this minimum cost. Our original maximization objective is $F(k) = -C(k)$.
  \item Since $C(k)$ is convex, $F(k)$ is \textbf{concave}.
\end{enumerate}

This validates the WQS Binary Search approach used in the algorithm.
\end{proof}

\abbriv
\end{answer}

\setcounter{secnumdepth}{1}
\fi

\end{document}